\newtheorem{lemma}{Lemma}
\def\NoNumber#1{{\def\alglinenumber##1{}\State #1}\addtocounter{ALG@line}{-1}}
\begin{document}
%
\title{Energy-Efficient Communication Networks via Multiple Aerial Reconfigurable Intelligent Surfaces: DRL and Optimization Approach}
%
%
%

\author{Pyae Sone Aung,
        Yu~Min~Park,
        Yan~Kyaw~Tun,~\IEEEmembership{Member,~IEEE,}
        Zhu Han,~\IEEEmembership{Fellow,~IEEE,}
        and~Choong~Seon~Hong,~\IEEEmembership{Senior Member,~IEEE,}
\thanks{Pyae Sone Aung, Yu Min Park, Yan Kyaw Tun, and Choong Seon Hong are with the
Department of Computer Science and Engineering, Kyung Hee University,
Yongin-si, Gyeonggi-do 17104, Rep. of Korea, e-mail:\{pyaesoneaung, yumin0906, ykyawtun7,
cshong\}@khu.ac.kr.}
\thanks{Zhu Han is with the Electrical and Computer Engineering Department,
University of Houston, Houston, TX 77004, and the Department of Computer
Science and Engineering, Kyung Hee University, Yongin-si, Gyeonggi-do
17104, Rep. of Korea, email\{zhan2\}@uh.edu.}}

\maketitle
\begin{abstract}
In the realm of wireless communications in 5G, 6G and beyond, deploying unmanned aerial vehicle (UAV) has been an innovative approach to extend the coverage area due to its easy deployment. Moreover, reconfigurable intelligent surface (RIS) has also emerged as a new paradigm with the goals of enhancing the average sum-rate as well as energy efficiency. By combining these attractive features, an energy-efficient RIS-mounted multiple UAVs (aerial RISs: ARISs) assisted downlink communication system is studied. Due to the obstruction, user equipments (UEs) can have a poor line of sight to communicate with the base station (BS). To solve this, multiple ARISs are implemented to assist the communication between the BS and UEs. Then, the joint optimization problem of deployment of ARIS, ARIS reflective elements on/off states, phase shift, and power control of the multiple ARISs-assisted communication system is formulated. The problem is challenging to solve since it is mixed-integer, non-convex, and NP-hard. To overcome this, it is decomposed into three sub-problems. Afterwards, successive convex approximation (SCA), actor-critic proximal policy optimization (AC-PPO), and whale optimization algorithm (WOA) are employed to solve these sub-problems alternatively. Finally, extensive simulation results have been generated to illustrate the efficacy of our proposed algorithms.
\end{abstract}

\begin{IEEEkeywords}
Aerial reconfigurable intelligent surface (ARIS), deployment, reflective elements on/off, phase shift, transmit power optimization, successive convex optimization (SCA), actor-critic proximal policy optimization (AC-PPO), whale optimization algorithm (WOA).
\end{IEEEkeywords}

%
\IEEEpeerreviewmaketitle

\section{Introduction}


\subsection{Background and Motivations}
As claimed by Cisco Networking Index (CNI), the number of Internet users reached 3.9 billion in 2018 and is anticipated to surpass 5.3 billion by 2023 \cite{cisco2020cisco}. Rapid growth of multimedia devices such as the Internet of Things (IoT), video streaming, online gaming, Virtual Reality (VR) and Augmented Reality (AR) applications, thrives immense challenges for current communication architecture and motivates to discover new ways to enhance spectral efficiency in both academic and industrial fields. Numerous ingenious wireless technologies have been developed in the last several years, which includes deploying unmanned aerial vehicles (UAVs) and Reconfigurable Intelligent Surfaces (RIS) elements.

Recently, UAVs have achieved a great deal of interests to deploy as a communication and computing platforms due to their high mobility and ease of deployment. The emplacement of UAVs can not only save the cost of mobile infrastructure which demands a large budget but also save time for quick on-demand deployment to provide services in rural regions or disaster areas or temporary events such as concerts, stadiums where the infrastructure is difficult to come across. In some scenarios \cite{tun2020energy, hu2019uav}, UAVs are implemented with a multi-access edge computing (MEC) system to deliver the computing resources near to the user equipment (UE) which saves a considerably large amount of time for uploading, computing and downloading tasks.

The newly recent technology called RIS, which is incited from the recent development of meta-surfaces, benefits the wireless communications in extending the coverage range and improving the signal quality at the receiver \cite{elmossallamy2020reconfigurable}. RIS is a man-made meta-surface implemented with low-cost passive elements that can be programmed by integrated electronic circuits to alter the incoming electromagnetic field into the desirable way \cite{huang2019reconfigurable}. Unlike the traditional collaborative communications such as decode-and-forward (DF) and amplify-and-forward (AF), RIS does not require additional power amplifier hence, is more environmental friendly and energy-efficient \cite{liu2021reconfigurable}. Taking into account of its cost efficiency and energy efficiency, RIS technology has acquired a vast attention in 5G, 6G and beyond communications. Furthermore, since RIS structures consist of relatively small hardware components, they can be easily integrated in several communication environments such as along the surfaces of the building \cite{li2019machine}.

\subsection{Challenges and Research Contributions}
When UAVs are considered as communication and computing platforms , there exists several challenges in UAVs' energy consumption as they are energy-constrained devices. On the other side, even though RIS can enhance the spectral efficiency, setting up RIS structures to achieve Line-of-sight (LoS) links between UE and RIS is still quite challenging issues. Taking the advantages of RIS in enhancing spectral efficiency without the requirement of any external power sources with the aid of UAVs to obtain LoS links between UE and RIS, we propose the multiple aerial RISs (ARISs)-assisted system to extend the downlink communication links from the ground base station (BS) to the UEs. In our system, we assume that there is no dominant LoS links between the BS and UEs due to obstacles. The contributions of our paper can be organized as following:
\begin{itemize}
    \item Firstly, we propose the downlink communication system between the BS and UEs, which is assisted by the multiple ARISs to enhance the spectral efficiency for all UEs since the dominant LoS links between BS and UEs are blocked by the obstacles. We assume the BS and ARISs are deployed by the same service operator and thus the BS is responsible for ARIS deployment and controlling the on/off states and the phase shifts for the ARIS reflective elements.
    \item Secondly, we formulate the problem to maximize the energy efficiency of the proposed system by jointly optimizing the ARISs deployment, ARIS reflective elements on/off states, phase shift, and power control. We show that the formulated problem is a mixed integer non-linear programming (MINLP) problem and it is challenging to solve in the polynomial time.
    \item To address this challenge, we decompose our formulated problem into three sub-problems: 1) ARISs deployment problem, 2) joint ARIS reflective elements on/off states and phase shift problem, and 3) power control problem. Then, successive convex approximation (SCA), actor-critic proximal policy optimization (AC-PPO), and whale optimization algorithm (WOA) are proposed to solve these sub-problems, alternatively.
    \item Finally, a comprehensive numerical analysis is integrated to validate efficacy of the overall performance of our proposed algorithms with several benchmark schemes, such as single-ARIS, ARIS with fixed phase shifts (ARIS-NPS), and UAV as relay (UAV-relay) scenarios. We achieve the improvement in average sum-rate by 24$\%$ and 58$\%$ compared to the single-ARIS and the ARIS-NPS scenarios, and 43$\%$ and 72$\%$ increase in energy-efficiency compared to the single-ARIS and the UAV-relay scenarios, respectively. Moreover, our proposed multiple ARISs-assisted system achieves 69$\%$ increase in average sum-rate compared to the multiple RISs-assisted system.
\end{itemize}

The rest of the paper is categorized as follows: we present the related works in Section \ref{relatedworks}. Next, we present our system model and problem formulation in Section \ref{systemmodel}. Afterwards, the solution approach is proposed in Section \ref{solution}, and performance evaluation is performed in Section \ref{evaluation}. Finally, Section \ref{conclusion} concludes our paper.

\section{Related Works}\label{relatedworks}
\subsection{UAV-assisted wireless networks}
An overview on the literature related to UAV-assisted wireless networks are discussed in this section \cite{bucaille2013rapidly, zhan2011wireless, mozaffari2016unmanned, feng2006wlcp2, mozaffari2017wireless,mozaffari2018beyond,wang2019adaptive}. The major strength of UAV in enhancing coverage area, energy-efficiency, and cost-efficiency has received significant attention in recent years \cite{bucaille2013rapidly}. In \cite{zhan2011wireless}, the authors studied to maximize the uplink communication where UAVs are served as relays. In \cite{mozaffari2016unmanned}, the authors studied a single UAV-assisted device-to-device (D2D) communications and analyse how the appropriate UAV's altitude can impact the rate performance and coverage area on the D2D users' density. The authors in \cite{feng2006wlcp2} derived the channel model of the LoS probability for the air-to-ground UAV communications. There exists several works that studied upon UAV deployment \cite{mozaffari2017wireless,mozaffari2018beyond,wang2019adaptive, liu2018energy}. The authors in \cite{mozaffari2017wireless} studied the UAVs deployment for UAV-to-ground communication in arbitrary spatial distribution for network planning to provide wireless services to the ground users and the authors in \cite{mozaffari2018beyond} studied the incorporation between UAVs in 3D cellular network. The work in \cite{wang2019adaptive} studied the adaptive UAV deployment for the dynamic users. The authors in \cite{liu2018energy} studied DRL-based dynamic UAV control instead of static UAV deployment. In all of the above works, UAV is considered either as aerial BS or MEC devices or relays, which results in higher energy consumption.

\subsection{RIS-assisted wireless networks}
An overview on the literature related to RIS-assisted wireless networks are discussed in this section \cite{huang2019reconfigurable, lee2020deep, zhang2020reconfigurable, di2020hybrid, chen2020reconfigurable, ai2021secure, al2022reconfigurable, huang2020reconfigurable, yang2020deep}. In \cite{huang2019reconfigurable}, the authors considered to develop the energy-efficient architecture for the RIS structures in accordance with power allocation and phase shifting values of RIS elements while guaranteeing the individual data rate budget for each user. In \cite{lee2020deep}, the authors proposed the energy-harvesting RIS elements implemented on the facades of the buildings in order to maximize the spectral efficiency while enabling the transmit power control and RIS configuration under the indeterminate wireless channel condition. The authors in \cite{zhang2020reconfigurable} aimed to distinguish the principal relationship between the total sum-rate of multiple users and the required number of RIS reflective elements in wireless communications. They observed the capacity of the system could no longer efficiently rise as the number of RIS elements reaches the upper bound limit. They also investigated how the number of phase shifts can effect the performance on the achievable data rate. The authors in \cite{di2020hybrid} investigated the practical case study between phase shift and finite-sized RIS to maximize the downlink multi-user system. In \cite{chen2020reconfigurable}, the authors studied about RIS elements to eliminate interference between multiple D2D uplink communication network. There has also been several studies on RIS-assisted in the vehicular networks. In \cite{ai2021secure}, the authors investigated the secrecy outage probability upon vehicular-to-vehicular (V2V) and vehicular-to-infrastructure (V2I). The authors in \cite{al2022reconfigurable} aimed to maximize the data rate for each vehicle where the communication links from the road site unit (RSU) is extended by the RIS technology with discrete phase shift. The authors in \cite{huang2020reconfigurable} studied deep reinforcement learning (DRL) based RIS-assisted multi-user downlink multiple input single output (MISO) system. The work in \cite{yang2020deep} considered to improve the secrecy rate of users in RIS-assisted system by constructing DRL-based QoS-aware reward function. All of the aforementioned works only considered the RIS-assisted networks, where RIS elements are either implemented on the ground level or facades of the building, which is still challenging to achieve the dominant LoS communication links between the BS-RIS-users.

\subsection{UAV-RIS-assisted wireless networks}
An overview on the literature related to UAV-RIS-assisted wireless networks are discussed in this section \cite{cao2021reconfigurable, li2021robust, ge2020joint, liu2020machine, shang2021uav, shang2022aerial, li2021aerial, khalili2021resource,  quispe2021joint}. The authors in \cite{cao2021reconfigurable} examined the adaptive RIS-assisted aerial-terrestrial downlink communication system between UAVs and multi-users with respect to RIS elements allocation and reflective coefficients. In \cite{li2021robust}, the authors looked into UAV-user communication with RIS assistance in order to maximize the worst-case secrecy rate by taking into account of the transmitter's power allocation, RIS's beamforming and UAV's trajectory.
The authors in \cite{ge2020joint} proposed the RIS-assisted UAV communications to maximize the received signal power at the ground user by considering the passive and active beamforming and UAV's trajetory. Furthermore, in \cite{liu2020machine}, the authors minimized the energy consumption problem for both orthogonal multiple access (OMA) and non-orthogonal multiple access (NOMA) cases by jointly considering the trajectory for the UAV and passive beamforming of the RIS elements. There also exists several works on ARIS-assisted system \cite{shang2021uav, shang2022aerial}. In \cite{li2021aerial}, the authors considered ARIS-assisted system to satisfy the constraints of ultra-reliable low latency communication (URLLC). The authors in \cite{khalili2021resource} studied the several UAVs-RISs-assisted total transmit power minimization for the heterogeneous networks. They did not consider the energy efficiency of the system. The authors in \cite{quispe2021joint} considered to maximize energy efficiency for a single ARIS-assisted downlink communication for single user. They did not consider for the multiple ARISs-assisted scenario. In this paper, we propose the multiple ARISs in order to maximize the average energy efficiency for the downlink communication between the BS and the UEs. 
\section{System Model}\label{systemmodel}
\begin{figure}[t]
    \centering
	\includegraphics[width=0.7\linewidth]{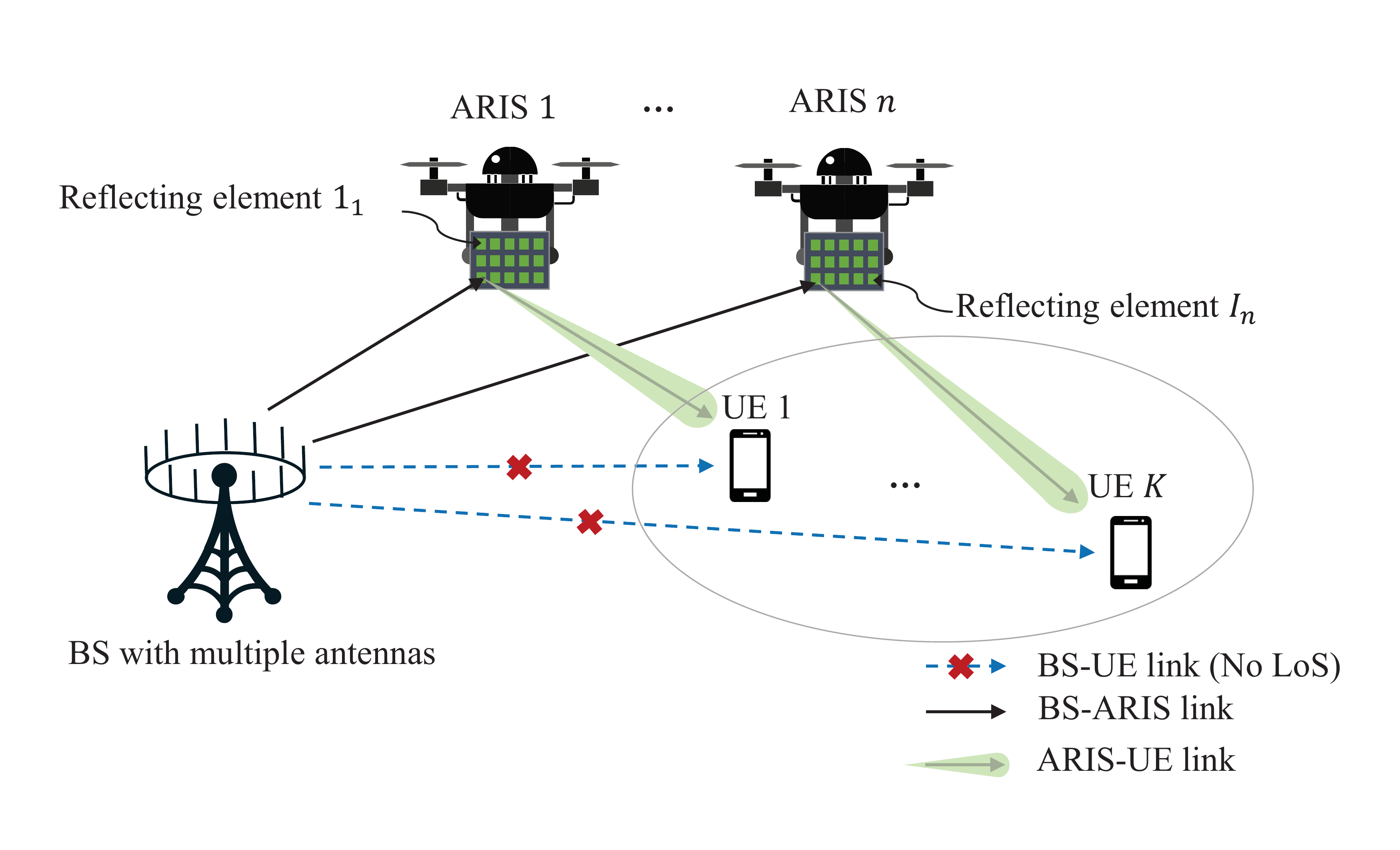}
	\caption{System model for RIS-mounted UAVs.}
	\label{smfig}
\end{figure}
Our system model includes a BS $B$ with multiple antennas, a set $\mathcal{N}$ of $N$ ARISs in which each RIS is implemented on each UAV and each ARIS $n \in \mathcal{N}$ contains an array of $\mathcal{I}_n = [1_n, 2_n, \dots, I_n]$, reflective elements and a set $\mathcal{K}$ of $K$ UEs with single antenna as shown in Fig. \ref{smfig}. The coordinates of the BS is denoted by $\boldsymbol{q}_B = (x_{B},y_{B},z_{B})$, where $z_{B}$ is the height of the BS. Similarly, the positions of each UE $k$ and each ARIS $n$ can be represented as $\boldsymbol{q}_k = (x_{k},y_{k},0)$ and $\boldsymbol{q}_n = (x_{n},y_{n}, z_{n})$ respectively, and $z_{n}$ is the height where the RIS-mounted UAV is hovering. The time horizon of the system can be divided into a discrete set of $\mathcal{T} = [1,2,...,t,...,T]$.

Since ARIS has limited energy, apart from hovering, the ARIS reflective elements need to be turned off when there is no connection in order to reserve excessive energy. Authors in \cite{yang2021energy} and \cite{cao2021reconfigurable} prove that turning off the the whole RIS or some surface area of RIS can preserve energy. In our work, we define $\mathbf{\Delta} \in \mathbb{R}^{|\mathcal{N}|\times|\mathcal{I}_n|}$ as the on/off states matrix for all reflective elements $|\mathcal{I}_n|$ for each ARIS $n$ to decide whether to turn on or off. Therefore, the on/off states of the reflective element $i_n$ in each ARIS $n$ are controlled by the decision variable $\delta_{i_n}$ as follows:
\begin{equation}
\delta_{i_n}[t] =\left \{ \begin{array}{ll}{1,} & {\text { if reflective element $i$ of ARIS $n$ is switched on at time $t$,}}  \\ {0,} & {\text { otherwise.}}\end{array}\right.
\end{equation}

\subsection{Communication Model}
We adopt both direct and indirect communication links between the BS and the UEs. For the direct link, we assume there is no dominant propagation along the LoS signal between the BS and UEs. Therefore, we adopt the Rayleigh fading model and the channel gain for the BS-UE link at time $t$ can be obtained as follows:
\begin{equation}
   \mathbf{H}_{B,k}[t] = \sqrt{\kappa  d_{B,k}^{-\alpha}[t]}\tilde{h},
\end{equation}
where $\kappa$ is the channel gain at the reference distance $1$ m, $\alpha \geq 2$ is the path loss exponent,  $|d_{B,k}[t]|= ||\boldsymbol{q}_B[t]-\boldsymbol{q}_k[t]||$ is the Euclidean distance between the BS and UE $k$ at time $t$, and $\tilde{h}$ is the complex Gaussian random scattering component with zero mean and unit variance.

For the indirect communication, there exists two links: BS-ARIS link and ARIS-UE link, respectively. For the BS-ARIS link, we assume there is only LoS signal between the BS and ARIS, and thus the channel fading here is assumed to experience the Rician channel fading with only LoS components. Therefore, the channel gain between the BS and ARIS $n$ at time $t$ can be defined as:
\begin{equation}
    \mathbf{h}_{B,n}[t] = \sqrt{\kappa d_{B,n}^{-\alpha}[t]} \sqrt{\frac{\hat{R}}{1+\hat{R}}}\mathbf{h}_{B,n}^{\mathrm{LoS}}[t],
\end{equation}
where $\hat{R}$ is the Rician factor, and $|d_{B,n}[t]|= ||\boldsymbol{q}_B[t]-\boldsymbol{q}_n[t]||$ is the distance between the BS to ARIS $n$ at time $t$. $\mathbf{h}_{B,n}^{\mathrm{LoS}}[t]$ is the deterministic LoS component between the BS and ARIS $n$ in correspondence with the azimuth angle-of-arrival (AoA) of the link at time $t$ \cite{cao2021reconfigurable}.
For the ARIS-UE link, there are both LoS and non-line-of-sight (NLoS) propagation between ARISs and UEs. Consequently, the Rician fading model is adopted and the channel gain for the ARIS-UE link at time $t$ can be obtained as follows:
\begin{equation}
\mathbf{h}_{n,k}[t] = \sqrt{\kappa d_{n,k}^{-\alpha}[t]} \sqrt{\frac{\hat{R}}{1+\hat{R}}}\mathbf{h}_{n,k}^{\mathrm{LoS}}[t]+\sqrt{\frac{1}{1+\hat{R}}}\mathbf{h}_{n,k}^{\mathrm{NLOS}},
\end{equation}
where $|d_{n,k}[t]|=||\boldsymbol{q}_n[t] - \boldsymbol{q}_k[t]||$ is the distance between ARIS $n$ and UE $k$ at time $t$. $\mathbf{h}_{n,k}^{\mathrm{LoS}}[t]$ is the deterministic LoS component between ARIS $n$ and UE $k$ corresponding with the azimuth angle-of-departure (AoD) of the link and $\mathbf{h}_{n,k}^{\mathrm{NLOS}}$ is the non-LoS component which follows the identically and independently distributed circularly-symmetric complex Gaussian distribution.

Furthermore, at time $t$, the incident signals are reflected by each reflective element $i$ of ARIS $n$ from the feasible range of phase shift values specified by
\begin{equation}
    \theta_{i_n}[t] = e^{(\frac{2\pi\phi}{2^b})},
\end{equation}
where $\phi$ is the phase shift index, and $b$ is the phase shift resolution in bits \cite{huang2018energy}. Therefore, a vector of $\boldsymbol{\theta}_{i_n}[t] = [\theta_{1_n}[t], \theta_{2_n}[t], \ldots, \theta_{I_n}[t]]$ represents the phase shift values of ARIS $n$. Following that, the reflection coefficient matrix can be denoted by
\begin{equation}
\boldsymbol{\Theta}_n[t] = \operatorname{diag}(\beta_{1_n}e^{j\theta_{1_n}[t]},\beta_{2_n}e^{j\theta_{2_n}[t]},\ldots,\beta_{I_n}e^{j\theta_{I_n}[t]}),
\end{equation}
where $\beta_{i_n} \in [0,1]$ denotes the amplitude reflection coefficient of the $i$-th reflective element of the $n$-th ARIS, and $j$ is the imaginary unit of a complex number. Therefore, the received signal at UE $k$ can be achieved as follows:
\begin{equation}\label{y_k}
    y_k[t] = \left(\mathbf{H}_{B,k}[t] + \sum_{n=1}^{N}\sum_{i=1}^{I_n}\delta_{i_n}[t] \mathbf{h}_{n,k}[t] \boldsymbol{\Theta}_n[t] \mathbf{h}_{B,n}[t]\right) \boldsymbol{x} + \omega_k,
\end{equation}
where $\boldsymbol{x} = \sum_{k=1}^{K} \boldsymbol{g_k}[t] s_k$ is the transmitted signal from the BS with beamforming vector $\boldsymbol{g_k}[t]$ at time $t$, and the unit-power complex based information symbol $s_k$ for UE $k$, while $\omega_k \sim \mathcal{CN}(0,\sigma^2)$ denotes the additive white Gaussian noise (AWGN) at UE $k$. Based on (\ref{y_k}), the signal-to-interference-plus-noise ratio (SINR) received at UE $k$ can be obtained as
\begin{equation}\label{gamma_k}
\gamma_k[t] = \frac{\left|\left(\mathbf{H}_{B,k}[t] +\sum_{n=1}^{N}\sum_{i=1}^{I_n}\delta_{i_n}[t] \mathbf{h}_{n,k}[t] \boldsymbol{\Theta}_n[t] \mathbf{h}_{B,n}[t]\right) \boldsymbol{g_k}[t]\right|^2}{%
\sum_{l=1,l \neq k}^{K}|(\mathbf{H}_{B,k}[t]+\sum_{n=1}^{N}\sum_{i=1}^{I_n}\delta_{i_n}[t] \mathbf{h}_{n,k}[t] \boldsymbol{\Theta}_n[t] \mathbf{h}_{B,n}[t]) \boldsymbol{g_l}[t]|^2 + \sigma^2.}
,
\end{equation}
Afterwards, based on (\ref{gamma_k}), the achievable data rate of UE $k$ can be formulated as follows:
\begin{equation}
    r_k[t] = W \log_{2}(1+\gamma_k[t]),
\end{equation}
where $W$ is the transmission bandwidth available for each UE. Therefore, the sum-rate of all users can be described as follows:
\begin{equation}
    R[t] = \sum_{k=1}^{K} r_k[t].
\end{equation}

\subsection{Power Consumption Model}
In our scenario, we need to take account of the power consumption of ARIS hovering. We assume that ARISs are considered to be hovering at the designated altitude and thus, rotary-wing UAV is adopted. Therefore, the power consumption for the hovering of the rotary-wing UAV, $P_{\mathrm{UAV}}$ can be obtained as follows \cite{jung2021orchestrated}:
\begin{equation}
    P_{\mathrm{UAV}} = \frac{\nu}{8} \varphi \Lambda \eta v_{a}^{3} \varrho+(1+\iota) \frac{\tilde{w}^{3 / 2}}{\sqrt{2 \varphi \eta}},
\end{equation}
which contains two terms: power required to rotate the rotor blades, and power required to endure the induced drag generated by the lift. The symbols $\nu$, $\varphi$, $\Lambda$, $\eta$, $v_a$, and $\varrho$ represent the coefficient of the profile drag, density of the air, rotor solidity, disc area of the rotor, blade angular velocity, and radius of the rotor, respectively. Moreover, $\iota$ and $\tilde{w}$ denote the incremental correction factor, and weight of the aircraft, respectively.

Furthermore, in this work, the BS controls the phase shifts of the ARIS reflective elements. Hence, the total power of the considered multiple ARISs-assisted downlink system includes: 1) transmit power of the BS, 2) circuit power of the each UE $k$, 3) circuit power consumption of ARIS and 4) hovering power of the rotary-wing UAV \cite{yang2021energy}, and is defined as
\begin{equation}
    P[t] = \sum_{k=1}^{K}(\zeta \boldsymbol{g_k}[t]^H \boldsymbol{g_k}[t] + P_k^{\mathrm{cir}}) + \sum_{n=1}^{N} \sum_{i=1}^{I_n} \delta_{i_n}[t] I_n P_{\mathrm{ARIS}} + P_{\mathrm{UAV}},
\end{equation}
where $\zeta = 1 / \mu$ with $\mu$ being the transmit power amplifier efficiency, $P_k^{\mathrm{cir}}$ is the circuit power of each user $k$, and $P_{\mathrm{ARIS}}$ is the power consumption for each ARIS. The transmit signal power of the BS has the constraint as follows:
\begin{equation}
    \mathrm{tr}(\boldsymbol{g}[t]^H\boldsymbol{g}[t]) \leq P_{\max}, \forall{t} \in \mathcal{T},
\end{equation}
where $\mathrm{tr(\mathbf{S})}$ means the trace of square matrix $\mathbf{S}$, $\boldsymbol{g} = [\boldsymbol{g}_1; \dots ; \boldsymbol{g}_k]$ and $P_{\max}$ is the maximum transmission power available at the BS.

\subsection{Problem Formulation}
The main objective of this work is to maximize energy efficiency of the system, i.e., to maximize the average sum-rate $R[t]$ for the UEs under the constraint of the power consumption $P[t]$ of both ARISs and the BS. To accomplish this, we need to jointly optimize the deployment of ARIS, ARIS reflective element on/off states, and phase shift, and power control of the BS. Prior to problem formulation, we define the required constraints as follows:

Each UE $k$ is necessary to fulfill the demand for the specified data rate at time $t$, which is defined as:
\begin{equation}
    r_k[t] \geq r_k^{\min}[t],\forall{k} \in \mathcal{K}, \forall{n} \in \mathcal{N}, \forall{i} \in \mathcal{I}_n, \forall{t} \in \mathcal{T},
\end{equation}
where $r_k^{\min}[t]$ is the minimum data rate requirement for each UE $k$ at time $t$. The accessible phase shift value of $i$-th reflective element $n$-th ARIS at time $t$ should be between 0 to 2$\pi$ as follows:
\begin{equation}
    0 \leq \theta_{i_n}[t] < 2\pi, \forall{n} \in \mathcal{N}, \forall{i} \in \mathcal{I}_n, \forall{t} \in \mathcal{T}.
\end{equation}
A safe distance between two adjacent ARISs is necessary to ensure that the coverage area of each ARIS does not overlap with that of other. Thereby, it can avoid the interference between different ARIS. We denote $d_{\min}[t]$ as the threshold distance between two adjacent ARISs at time $t$, and can be defined as follows:
\begin{equation}
    ||\boldsymbol{q}_{i}[t]-\boldsymbol{q}_{j}[t]||^2 \geq d_{\min}[t], \forall{i, j} \in \mathcal{N}, i \neq j, \forall{t} \in \mathcal{T}.
\end{equation}
Furthermore, each reflective element $i$ of ARIS $n$ can only be either turned on or off at one time slot and can be given as follows:
\begin{equation}
    \delta_{i_n}[t] \in\{0,1\},\forall{n} \in \mathcal{N}, \forall{i} \in \mathcal{I}_n, \forall{t} \in \mathcal{T}.
\end{equation}
\begin{figure}[t!]
    \centering
	\includegraphics[width=0.7\linewidth]{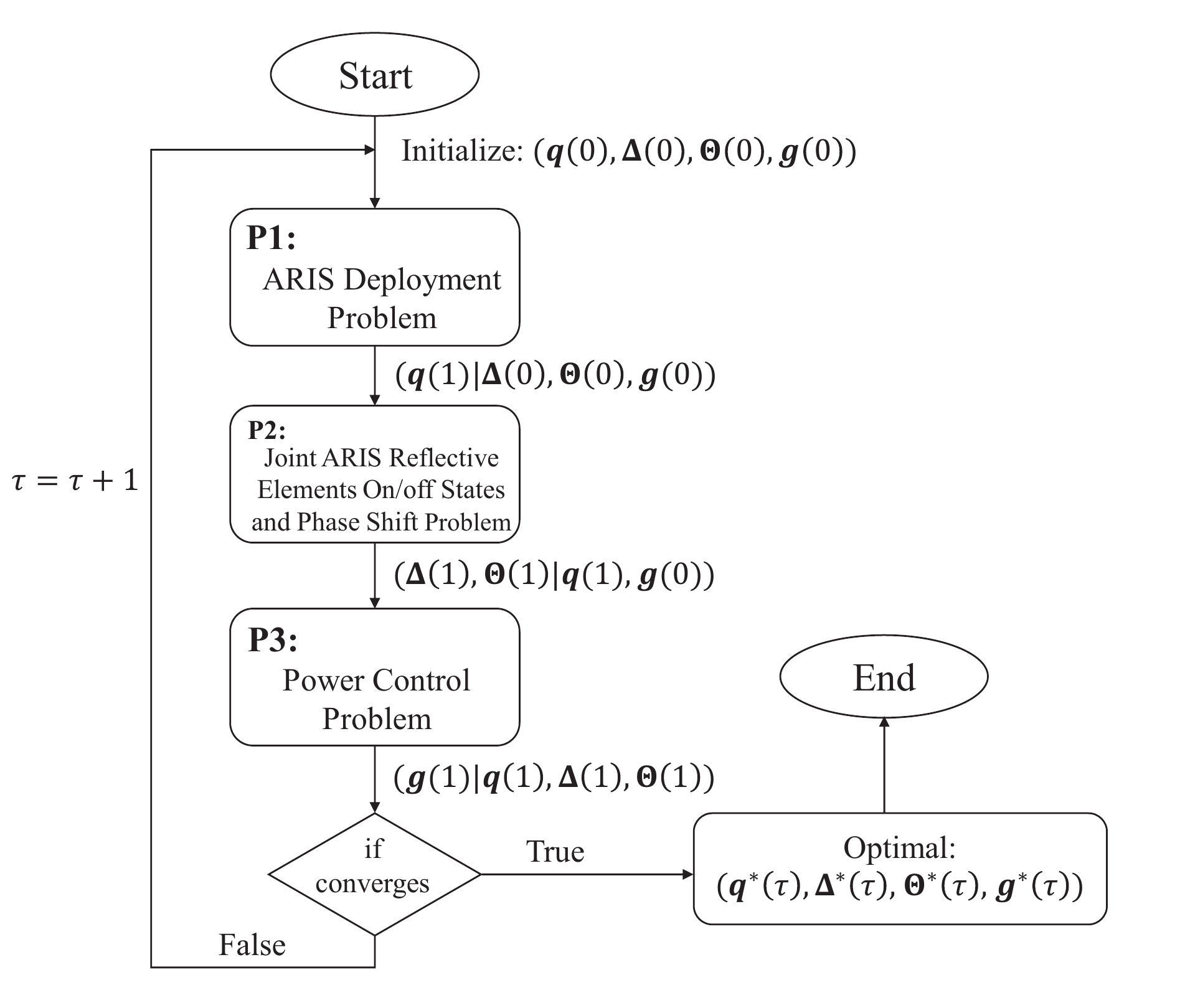}
	\caption{Flow diagram of joint ARIS deployment, ARIS reflective elements on/off states, phase shift and power control problem.}
	\label{flowdiagram}
\end{figure}
Given the above mentioned network characteristics, our optimization problem $\mathcal{E}$ can be mathematically formulated as follows:
\begin{maxi!}|s|[1]
	{\boldsymbol{q}, \boldsymbol{\Delta}, \boldsymbol{\Theta}, \boldsymbol{g}} {\mathcal{E}(\boldsymbol{q}, \boldsymbol{\Delta}, \boldsymbol{\Theta}, \boldsymbol{g})}
	{\label{OF}}{\textbf{P:}}\label{OFa}
	\addConstraint{
	r_k[t] \geq r_k^{\min}[t],\forall{k} \in \mathcal{K}, \forall{n} \in \mathcal{N}, \forall{i} \in \mathcal{I}_n, \forall{t} \in \mathcal{T}}\label{c1}
	\addConstraint{0 \leq \theta_{i_n}[t] < 2\pi, \forall{n} \in \mathcal{N}, \forall{i} \in \mathcal{I}_n, \forall{t} \in \mathcal{T}}\label{c2}
	\addConstraint{
	||\boldsymbol{q}_{i}[t]-\boldsymbol{q}_{j}[t]||^2 \geq d_{\min}[t], \forall{i, j} \in \mathcal{N}, i \neq j, \forall{t} \in \mathcal{T}
	}\label{c3}
	\addConstraint{\mathrm{tr}(\boldsymbol{g}[t]^H\boldsymbol{g}[t]) \leq P_{\max}, \forall{t} \in \mathcal{T}}\label{c4}
	\addConstraint{\delta_{i_n}[t] \in\{0,1\},\forall{n} \in \mathcal{N}, \forall{i} \in \mathcal{I}_n, \forall{t} \in \mathcal{T}.}\label{c6}
\end{maxi!}
\begin{figure*}[t!]
\begin{equation}\label{OFee}
   \mathcal{E}(\boldsymbol{q}, \boldsymbol{\Delta}, \boldsymbol{\Theta}, \boldsymbol{g}) = \frac{1}{T}\sum_{t=0}^{T}\frac{R[t]}{P[t]} = \frac{\sum_{k=1}^{K} W \log_{2}\left(1+\frac{\left|\left(\mathbf{H}_{B,k}[t]+\sum_{n=1}^{N}\sum_{i=1}^{I_n}\delta_{i_n}[t] \mathbf{h}_{n,k}[t] \boldsymbol{\Theta}_n[t] \mathbf{h}_{B,n}[t]\right) \boldsymbol{g_k}[t]\right|^2}{
    \sum_{l=1,l \neq k}^{K}|(\mathbf{H}_{B,k}[t]+\sum_{n=1}^{N}\sum_{i=1}^{I_n}\delta_{i_n}[t] \mathbf{h}_{n,k}[t] \boldsymbol{\Theta}_n[t] \mathbf{h}_{B,n}[t]) \boldsymbol{g_l}[t]|^2 + \sigma^2}\right)}
    {\sum_{k=1}^{K}(\zeta \boldsymbol{g_k}[t]^H \boldsymbol{g_k}[t] + P_k^{\mathrm{cir}}) + \sum_{n=1}^{N} \sum_{i=1}^{I_n} \delta_{i_n}[t] I_n P_{\mathrm{ARIS}} + P_{\mathrm{UAV}}}
\end{equation}
\hrulefill
\end{figure*}
The objective function in (\ref{OFa}) is shown in (\ref{OFee}).
The problem $\textbf{P}$ is a mixed integer non-linear programming (MINLP) problem which is non-convex. Therefore, it is challenging to solve the whole problem in polynomial time. Furthermore, there are couplings in both the objective function and constraints between the ARIS deployment, ARIS reflective elements on/off states and phase shift, and power control of the BS. Therefore, problem $\textbf{P}$ is quite implausible to solve and there is no effective solution approach to deal with these difficulties.

Thus, we first decompose our optimization problem $\textbf{P}$ into three sub-problems, $\textbf{P1:}$ ARIS deployment problem, $\textbf{P2:}$ ARIS reflective elements on/off states and phase shift problem, and $\textbf{P3:}$ power control problem. Then, we solve the sub-problems iteratively until we reach the convergence and the detailed figure of our proposed solution technique is shown in Fig. \ref{flowdiagram}.

\section{Solution Approach}\label{solution}
\subsection{ARIS Deployment Problem}
For the given ARIS reflective elements on/off states $\boldsymbol{\Delta}$, phase shift values $\boldsymbol{\Theta}$ and power control $\boldsymbol{g}$, the sub-problem $\textbf{P1}$ can be represented as follows:
\begin{maxi!}|s|[0]
	{\boldsymbol{q}} {\mathcal{E}(\boldsymbol{q})}
	{\label{OF2}}{\textbf{P1:}}\label{p1}
	\addConstraint{
	r_k[t] \geq r_k^{\min}[t],\forall{k} \in \mathcal{K}, \forall{n} \in \mathcal{N}, \forall{i} \in \mathcal{I}_n, \forall{t} \in \mathcal{T}}\label{c11}
	\addConstraint{
	||\boldsymbol{q}_{i}[t]-\boldsymbol{q}_{j}[t]||^2 \geq d_{\min}[t], \forall{i, j} \in \mathcal{N}, i \neq j, \forall{t} \in \mathcal{T}.}\label{c12}
\end{maxi!}
The objective function of sub-problem $\textbf{P1}$ remains non-concave since $\mathbf{h}_{n,k}[t]$ and $\mathbf{h}_{B,n}[t]$ are complex and non-linear with respect to ARIS deployment $\boldsymbol{q_n}$. To handle this, we use the approximation algorithm for $\mathbf{h}_{n,k}[t]$ and $\mathbf{h}_{B,n}[t]$. Then, we rewrite our sub-problem $\textbf{P1}$ as follows:
\begin{maxi!}|s|[0]
	{\boldsymbol{q}} {\dot{\mathcal{E}}(\boldsymbol{q})}
	{\label{OF211}}{}
	\addConstraint{
	r_k[t] \geq r_k^{\min}[t],\forall{k} \in \mathcal{K}, \forall{n} \in \mathcal{N}, \forall{i} \in \mathcal{I}_n, \forall{t} \in \mathcal{T}}\label{c111}
    \addConstraint{
	||\boldsymbol{q}_{i}[t]-\boldsymbol{q}_{j}[t]||^2 \geq d_{\min}[t], \forall{i, j} \in \mathcal{N}, i \neq j, \forall{t} \in \mathcal{T},}\label{c122}
\end{maxi!}
\begin{figure*}
\begin{equation}\label{dot}
    \dot{\mathcal{E}}(\boldsymbol{q}) =  \sum_{k=1}^{K} W \log_{2}\left(1+\frac{\left(H_{B,k}[t]+\sum_{n=1}^{N}\sum_{i=1}^{I_n}\kappa \delta_{i_n}[t] g_k[t] \mathbf{h}_{ab}^T[t] \mathbf{H'}[t] \mathbf{h}_{ab}[t]\right) }{
    \sum_{l=1,l \neq k}^{K}(H_{B,k}[t]+\sum_{n=1}^{N}\sum_{i=1}^{I_n}\kappa \delta_{i_n}[t] g_l[t] \mathbf{h}_{ab}^T[t] \mathbf{H'}[t] \mathbf{h}_{ab}[t] + \sigma^2)}\right)
\end{equation}
\end{figure*}
where $\dot{\mathcal{E}}(\boldsymbol{q_n})$ is shown in (\ref{dot}), and
\begin{equation*}
    \mathbf{h}_{ab}[t] = \left[ \sqrt{(d_{n,k}[t])^{-\alpha}}, \sqrt{(d_{B,n}[t])^{-\alpha}}  \right]^T,
\end{equation*}
\begin{align*}
\begin{split}
    \mathbf{H'}[t] = &\left[ H_{B,k}^H, (\mathbf{h}_{n,k}^{(\hat{i}-1)}[t])^H  \boldsymbol{\Theta}_n[t] \mathbf{h}_{B,n}^{(\hat{i}-1)}[t] \right]\\
    &\left[ H_{B,k}^H, (\mathbf{h}_{n,k}^{(\hat{i}-1)}[t])^H  \boldsymbol{\Theta}_n[t] \mathbf{h}_{B,n}^{(\hat{i}-1)}[t] \right]^H.
\end{split}
\end{align*}

Next, we introduce the slack variables  $\boldsymbol{a} = \{a[t]\}_{t=1}^T$, $\boldsymbol{b} = \{b[t]\}_{t=1}^T$, and $\boldsymbol{\ddot{r}} = \{\ddot{r}[t]\}_{t=1}^T$, and the problem (\ref{OF211}) is transformed into the following problem as
\begin{maxi!}|s|
	{\boldsymbol{q},\boldsymbol{a},\boldsymbol{b},\boldsymbol{\ddot{r}}} {\ddot{\mathcal{E}}(\boldsymbol{\ddot{r}})}
	{\label{OF22ddot}}{}\label{p111ddot}
	\addConstraint{0 < a[t] \leq \sqrt{(d_{B,n}[t])^{-\alpha}},\forall{t} \in \mathcal{T}}\label{c113}
	\addConstraint{0 < b[t] \leq \sqrt{(d_{n,k}[t])^{-\alpha}},\forall{t} \in \mathcal{T}}\label{c114}
	\addConstraint{\boldsymbol{\tilde{h}}_{ab}^T[t] \boldsymbol{H'}[t] \boldsymbol{\tilde{h}}_{ab}[t] \geq \ddot{r}[t],\forall{t} \in \mathcal{T}}\label{c115}
	\addConstraint{(\ref{c11}),(\ref{c12}).}
\end{maxi!}
where $\ddot{\mathcal{E}}(\boldsymbol{\ddot{r}})$ is given in (\ref{ddot}), and $\boldsymbol{\tilde{h}}_{ab} = [a[t],b[t]]^T$.
\begin{figure*}
\begin{equation}\label{ddot}
    \ddot{\mathcal{E}}(\boldsymbol{\ddot{r}})= \sum_{k=1}^{K} W \log_{2}\left(1+\frac{\left(H_{B,k}[t]+\sum_{n=1}^{N}\sum_{i=1}^{I_n}\kappa \delta_{i_n}[t] g_k[t] \ddot{r}[t]\right) }{
    \sum_{l=1,l \neq k}^{K}(H_{B,k}[t]+\sum_{n=1}^{N}\sum_{i=1}^{I_n}\kappa \delta_{i_n}[t] g_l[t] \ddot{r}[t] + \sigma^2)}\right)
\end{equation}
\hrulefill
\end{figure*}
In order to simplify the derivations, we expand (\ref{c113}) and (\ref{c114}) as follows \cite{li2021robust}:
\begin{equation}\label{abn}
    x_B^2 + x_n[t]^2 + y_B^2 + y_n[t]^2 - 2x_Bx_n[t] - 2y_By_n[t]+ (z_B-z_n[t])^2 - (a[t])^{-\frac{4}{\alpha}} \leq 0,
\end{equation}
\begin{equation}\label{bnk}
    x_n[t]^2 + x_k[t]^2 + y_n[t]^2 + y_k[t]^2 - 2x_n[t]x_k[t] - 2y_n[t]y_k[t]+ (z_n[t]-z_k[t])^2 - (b[t])^{-\frac{4}{\alpha}} \leq 0.
\end{equation}

Still it is discovered that (\ref{abn}) and (\ref{bnk}) are in non-convex feasible regions. Therefore, we apply the SCA method to solve this non-convexity. The SCA approach is advantageous because it allows for the replacement of the original non-convex function with simpler surrogates to achieve a suboptimal solution \cite{scutari2018parallel}. Firstly, to obtain the global upper bound for the concave function, we first utilize the first-order Taylor expansion to find the linear approximation of the function. To do so, firstly, (\ref{p111ddot}) can be transformed into the difference of two concave functions as follows \cite{salman20223to}:
\begin{equation}\label{hml}
   \ddot{\mathcal{E}}(\boldsymbol{\ddot{r}}) \approx \hat{h}(\boldsymbol{\ddot{r}}) - \hat{l}(\boldsymbol{\ddot{r}}),
\end{equation}
where
\begin{equation}\label{hqn}
    \hat{h}(\boldsymbol{\ddot{r}}) = \sum_{k=1}^{K} \log_{2} \left(h_{B,k}[t]+\sum_{n=1}^{N}\sum_{i=1}^{I_n}\kappa \delta_{i_n}[t] g_k[t] \ddot{r}[t] + \sigma^2\right),
\end{equation}
and
\begin{equation}\label{lqn}
    \hat{l}(\boldsymbol{\ddot{r}}) = \sum_{l=1,l \neq k}^{K} \log_{2}\left(h_{B,k}[t]+\sum_{n=1}^{N}\sum_{i=1}^{I_n}\kappa \delta_{i_n}[t] g_l[t] \ddot{r}[t] + \sigma^2 \right).
\end{equation}
Both the functions $\hat{h}(\boldsymbol{\ddot{r}})$ and $\hat{l}(\boldsymbol{\ddot{r}})$ are convex. However the difference between them is neither convex nor concave, as represented in (\ref{hml}). Then, we find the feasible solution $\boldsymbol{\ddot{r}'}$ to problem (\ref{OF22ddot}) by computing the concave lower bound, i.e. the surrogate function of the non-concave objective, specified in (\ref{hml}). By implementing the first-order Taylor expansion to replace the $\hat{l}(\boldsymbol{\ddot{r}})$, we can construct its lower bound as follows:
\begin{equation}\label{lowerbound}
    \ddot{\mathcal{E}}(\boldsymbol{\ddot{r}, \ddot{r}'}) = \hat{h}(\boldsymbol{\ddot{r}}) - \hat{\hat{l}}\left((\boldsymbol{\ddot{r}, \ddot{r}'})\right),
\end{equation}
where
\begin{equation}\label{surrogate}
    \hat{\hat{l}}\left((\boldsymbol{\ddot{r}, \ddot{r}'})\right) \triangleq \hat{l}(\boldsymbol{\ddot{r}'}) - \nabla \hat{l}(\boldsymbol{\ddot{r}'})(\boldsymbol{\ddot{r}}-\boldsymbol{\ddot{r}'}),
\end{equation}
where $\nabla \hat{l}(\boldsymbol{\ddot{r}'})$ is the gradient of the $\hat{l}(\boldsymbol{\ddot{r}})$ at the given point $\boldsymbol{\ddot{r}'}$, and $\hat{\hat{l}}\left((\boldsymbol{\ddot{r}, \ddot{r}'})\right)$ represents the first-order Taylor's approximation of $\hat{l}(\boldsymbol{\ddot{r}})$ near $\boldsymbol{\ddot{r}'}$ in the feasible area of the solution space. The gradient for ARIS $n$ can be expressed as follows:
\begin{equation}
     \nabla_n \hat{l}(\boldsymbol{\ddot{r}'}) = \frac{\partial \hat{l}(\boldsymbol{\ddot{r}'})}{\partial \ddot{r}'} = \frac{1}{\ln2}\sum_{l=1,l \neq k}^{K} \frac{\sum_{n=1}^{N}\sum_{i=1}^{I_n}\kappa \delta_{i_n}[t] g_l[t]}{H_{B,k}[t]+\sum_{n=1}^{N}\sum_{i=1}^{I_n}\kappa \delta_{i_n}[t] g_l[t] \ddot{r}[t] + \sigma^2}.
\end{equation}
The surrogate function given in (\ref{surrogate}) is concave. Next, the upper bound of function $\hat{l}(\boldsymbol{\ddot{r}})$ may also be found using the first-order Taylor's expansion.
\begin{lemma}
The first-order Taylor approximation provides the global upper bound of a concave function or the global lowest bound of a convex function.
\end{lemma}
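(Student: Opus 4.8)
The plan is to recognize the statement as the standard first-order characterization of concavity (and, dually, convexity) for a differentiable function, and to prove it directly from the defining inequality rather than quoting it. Let $f$ be differentiable on a convex domain and let $\boldsymbol{x}_0$ be the expansion point, so its first-order Taylor approximation is $\hat{f}(\boldsymbol{x}) = f(\boldsymbol{x}_0) + \nabla f(\boldsymbol{x}_0)^{T}(\boldsymbol{x}-\boldsymbol{x}_0)$. I would first prove the concave case, $f(\boldsymbol{x}) \leq \hat{f}(\boldsymbol{x})$ for every $\boldsymbol{x}$, and then deduce the convex case at no extra cost by applying the concave result to $-f$.

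First I would write the concavity inequality along the segment from $\boldsymbol{x}_0$ to an arbitrary point $\boldsymbol{x}$: for each $\lambda \in (0,1]$, concavity gives $f\!\left(\boldsymbol{x}_0 + \lambda(\boldsymbol{x}-\boldsymbol{x}_0)\right) \geq (1-\lambda)f(\boldsymbol{x}_0) + \lambda f(\boldsymbol{x})$. Subtracting $f(\boldsymbol{x}_0)$ and dividing by $\lambda$ isolates the difference quotient $\frac{f(\boldsymbol{x}_0 + \lambda(\boldsymbol{x}-\boldsymbol{x}_0)) - f(\boldsymbol{x}_0)}{\lambda} \geq f(\boldsymbol{x}) - f(\boldsymbol{x}_0)$. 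Letting $\lambda \to 0^{+}$, differentiability forces the left-hand side to the directional derivative $\nabla f(\boldsymbol{x}_0)^{T}(\boldsymbol{x}-\boldsymbol{x}_0)$, whence $f(\boldsymbol{x}) - f(\boldsymbol{x}_0) \leq \nabla f(\boldsymbol{x}_0)^{T}(\boldsymbol{x}-\boldsymbol{x}_0)$, i.e. $f(\boldsymbol{x}) \leq \hat{f}(\boldsymbol{x})$. Since $\boldsymbol{x}$ and $\boldsymbol{x}_0$ are arbitrary, the tangent hyperplane lies everywhere above the graph, which is the claimed global upper bound. Replacing $f$ by $-f$ reverses every inequality and yields $f(\boldsymbol{x}) \geq \hat{f}(\boldsymbol{x})$, the global lower bound in the convex case.

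The one step I expect to require genuine care is the passage to the limit: I must justify that the difference quotient converges to $\nabla f(\boldsymbol{x}_0)^{T}(\boldsymbol{x}-\boldsymbol{x}_0)$, which relies on true differentiability of $f$ rather than mere existence of its partials, and I must note that the intermediate points $\boldsymbol{x}_0 + \lambda(\boldsymbol{x}-\boldsymbol{x}_0)$ remain admissible for small $\lambda$, a fact guaranteed by convexity of the domain. To tie the result back to its use here, I would close by observing that it is precisely this property that validates the tangent-based surrogate $\hat{\hat{l}}(\boldsymbol{\ddot{r}},\boldsymbol{\ddot{r}'})$ in (\ref{surrogate}): linearizing the term about $\boldsymbol{\ddot{r}'}$ bounds it from one side throughout the feasible region, so the approximation (\ref{lowerbound}) retains the definite-sign relation to the difference objective (\ref{hml}) that the SCA iterations require.
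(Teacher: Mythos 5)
Your proof is correct, and it is in fact more substantive than the paper's own argument. The paper's proof simply writes down the generic first-order Taylor expression and then \emph{asserts} the inequality $\hat{l}(\boldsymbol{\ddot{r}}) \leq \hat{l}(\boldsymbol{\ddot{r}'}) + \nabla \hat{l}(\boldsymbol{\ddot{r}'})(\boldsymbol{\ddot{r}}-\boldsymbol{\ddot{r}'})$ without justification; the remainder of its proof is devoted to the downstream consequence, namely chaining that inequality through $\ddot{\mathcal{E}} = \hat{h} - \hat{l}$ to show the surrogate $\ddot{\mathcal{E}}(\boldsymbol{\ddot{r}},\boldsymbol{\ddot{r}'})$ globally lower-bounds the objective and is tangent to it at $\boldsymbol{\ddot{r}'}$. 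You instead prove the lemma itself: the difference-quotient argument from the defining inequality of concavity, the limit $\lambda \to 0^{+}$ giving the directional derivative, and the reduction of the convex case to the concave case via $-f$ are all standard and sound, and your care about genuine differentiability and admissibility of the segment points is appropriate. What your route buys is a self-contained justification of the one step the paper takes on faith; what the paper's route emphasizes is the application to the SCA surrogate, which you only sketch in your closing remark. One point worth flagging if you consult the paper: it states that $\hat{l}$ is convex but then applies the upper-bound (concave-case) direction of the tangent inequality to it; since $\hat{l}$ is a sum of logarithms of affine functions of $\boldsymbol{\ddot{r}}$ it is in fact concave, so the inequality direction used is the right one and your lemma, which cleanly separates the two cases, is exactly what resolves that apparent inconsistency.
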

\begin{proof}
Initially, we define the first-order Taylor series as follows:
\begin{equation}
    f(x_0) + f'(x_0)(x-x_0).
\end{equation}
Afterwards, we have
\begin{equation}\label{upperbound}
    \hat{l}(\boldsymbol{\ddot{r}}) \leq \hat{l}(\boldsymbol{\ddot{r}'}) + \nabla \hat{l}(\boldsymbol{\ddot{r}'})(\boldsymbol{\ddot{r}}-\boldsymbol{\ddot{r}'}).
\end{equation}
Therefore, we can derive the observations by examining (\ref{hml}), (\ref{lowerbound}), and (\ref{upperbound}) as follows:
\begin{equation}\label{observation}
\begin{aligned}
\ddot{\mathcal{E}}(\boldsymbol{\ddot{r}}) &= \hat{h}(\boldsymbol{\ddot{r}}) - \hat{l}(\boldsymbol{\ddot{r}}) \\
& \geq \hat{h}(\boldsymbol{\ddot{r}})-\left\{\hat{l}(\boldsymbol{\ddot{r}'}) + \nabla \hat{l}(\boldsymbol{\ddot{r}'})(\boldsymbol{\ddot{r}}-\boldsymbol{\ddot{r}'})\right\} \\
& \geq \hat{h}(\boldsymbol{\ddot{r}})-\hat{l}(\boldsymbol{\ddot{r}'}) - \nabla \hat{l}(\boldsymbol{\ddot{r}'})(\boldsymbol{\ddot{r}}-\boldsymbol{\ddot{r}'}) \\
&= \ddot{\mathcal{E}}(\boldsymbol{\ddot{r}, \ddot{r}'}),
\end{aligned}
\end{equation}
where (\ref{observation}) denotes that the surrogate function provides the lower bound of the original function. As a result, at point $\boldsymbol{\ddot{r}'}$, i.e., $\ddot{\mathcal{E}}(\boldsymbol{\ddot{r}, \ddot{r}'})|_{\ddot{r}=\ddot{r}'}=\ddot{\mathcal{E}}(\boldsymbol{\ddot{r}'})$, the two functions are tangent to each other. Thereby, our objective function of sub-problem (\ref{OF22ddot}) has the lower bound function as obtained in (\ref{observation}).
\end{proof}
Consequently, we replace our objective function in problem (\ref{OF22ddot}) which is non-convex, by its surrogates as presented in (\ref{lowerbound}). Furthermore, we take the first-order Taylor expansions of $(a[t])^{-\frac{4}{\alpha}}, (b[t])^{-\frac{4}{\alpha}}$, and $\mathbf{\tilde{h}}_{ab}^T[t] \mathbf{H'}[t] \mathbf{\tilde{h}}_{ab}[t]$ at the given feasible points $\boldsymbol{a_0} = \{a_0[t]\}_{t=1}^T$, $\boldsymbol{b_0} = \{b_0[t]\}_{t=1}^T$, and $\mathbf{\tilde{H_0}}_{ab} = \{\mathbf{\tilde{h_0}}_{ab}[t]\}_{t=1}^T$ are expressed as follows:
\begin{equation}\label{ta}
    (a[t])^{-\frac{4}{\alpha}} \geq (a_0[t])^{-\frac{4}{\alpha}}- \frac{4}{\alpha}(a_0[t])^{-\frac{4}{\alpha}-1}(a[t]-a_0[t]),
\end{equation}
\begin{equation}\label{tb}
    (b[t])^{-\frac{4}{\alpha}} \geq (b_0[t])^{-\frac{4}{\alpha}}- \frac{4}{\alpha}(b_0[t])^{-\frac{4}{\alpha}-1}(b[t]-b_0[t]),
\end{equation}
\begin{equation}\label{th}
    \mathbf{\tilde{h}}_{ab}^T[t] \mathbf{H'}[t] \mathbf{\tilde{h}}_{ab}[t] \geq-\mathbf{\tilde{h_0}}_{ab}^T[t] \mathbf{H'}[t] \mathbf{\tilde{h_0}}_{ab}[t]+2\Re\left[\mathbf{\tilde{h_0}}_{ab}^T[t] \mathbf{H'}[t]\mathbf{\tilde{h}}_{ab}[t]\right].
\end{equation}
By combining (\ref{abn}) and (\ref{ta}), (\ref{bnk}) and (\ref{tb}), we get
\begin{align}
&\begin{aligned}\label{finalabn}
    &x_B^2 + x_n[t]^2 + y_B^2 + y_n[t]^2 - 2x_Bx_n[t] - 2y_By_n[t] + \\
    & (z_B-z_n[t])^2 - (1+\frac{4}{\alpha})(a_0[t])^{-\frac{4}{\alpha}}  + \frac{4}{\alpha}(a_0[t])^{-\frac{4}{\alpha}-1} a[t] \leq 0,
\end{aligned}\\
&\begin{aligned}\label{finalbnk}
    &x_n[t]^2 + x_k[t]^2 + y_n[t]^2 + y_k[t]^2 - 2x_n[t]x_k[t] - 2y_n[t]y_k[t]+\\
    &(z_n[t]-z_k[t])^2 - (1+\frac{4}{\alpha})(b_0[t])^{-\frac{4}{\alpha}}  + \frac{4}{\alpha}(b_0[t])^{-\frac{4}{\alpha}-1} b[t] \leq 0.\\
\end{aligned}
\end{align}

Similarly, we apply the first-order Taylor expansion to convert $||\boldsymbol{q}_{i}[t]-\boldsymbol{q}_{j}[t]||^2$ in constraint (\ref{c12}) to a linear function since it is a convex function with respect to $\boldsymbol{q}_{i}$ and $\boldsymbol{q}_{j}$. This can be expressed as follows:
\begin{equation}
    ||\boldsymbol{q}_{i}[t]-\boldsymbol{q}_{j}[t]||^2 \geq2(\boldsymbol{q}_{i}[t-1]-\boldsymbol{q}_{j}[t-1])^T(\boldsymbol{q}_{i}[t]-\boldsymbol{q}_{j}[t])-||\boldsymbol{q}_{i}[t-1]-\boldsymbol{q}_{j}[t-1]||^2.
\end{equation}
Afterwards, we can denote the above equation as follows:
\begin{equation}\label{transform}
     G_0[t-1](\boldsymbol{q}_{i}[t]-\boldsymbol{q}_{j}[t]) \triangleq2(\boldsymbol{q}_{i}[t-1]-\boldsymbol{q}_{j}[t-1])^T(\boldsymbol{q}_{i}[t]-\boldsymbol{q}_{j}[t])-||\boldsymbol{q}_{i}[t-1]-\boldsymbol{q}_{j}[t-1]||^2.
\end{equation}
Finally, we can substitute (\ref{transform}) into (\ref{c12}), and problem (\ref{OF22ddot}) can be rewritten as follows:
\begin{mini!}|s|[0]
	{\boldsymbol{q}, \boldsymbol{a},\boldsymbol{b},\boldsymbol{\ddot{r}}} {-\ddot{\mathcal{E}}(\boldsymbol{\ddot{r}, \ddot{r}'})}
	{\label{OFdde}}{}
	\addConstraint{
	\ddot{r}[t]+\boldsymbol{\tilde{h}_0}_{ab}^T[t] \boldsymbol{H'}[t] \boldsymbol{\tilde{h}_0}_{ab}[t]-2\Re\left[\boldsymbol{\tilde{h}_0}_{ab}^T[t] \boldsymbol{H'}[t]\boldsymbol{\tilde{h}}_{ab}[t]\right] \leq 0, \forall{t} \in \mathcal{T}
	}\label{cdde1}
	\addConstraint{
	G_0[t-1](\boldsymbol{q}_{i}[t]-\boldsymbol{q}_{j}[t]) \geq d_{\min}[t], \forall{i, j} \in \mathcal{N}, i \neq j, \forall{t} \in \mathcal{T}
	}\label{cdde2}
	\addConstraint{(\ref{c11}), (\ref{finalabn}), (\ref{finalbnk}).}\label{c1111}
\end{mini!}
Problem (\ref{OFdde}) becomes a convex optimization problem, which we can solve by using CVXPY solver in python programming. The overall algorithm of the SCA method is shown in Algorithm \ref{algo1}.
\begin{algorithm}[t] \caption{SCA algorithm for ARIS deployment}\label{algo1}
	\begin{algorithmic}[1]
		\renewcommand{\algorithmicrequire}{\textbf{Input:}}
		\Require Initial feasible points $\{\boldsymbol{q}^0, \boldsymbol{a}^0, \boldsymbol{b}^0\}$, $r_k^{\mathsf{min}}[t]$, $d_{\min}[t]$, iteration index $\hat{i} = 0$, $\hat{i}_{max}$, stopping criterion $\varepsilon_1$.
		\Repeat
		    \State Set $\hat{i} \leftarrow \hat{i} + 1$.
		    \State Update $\boldsymbol{q}^{\hat{i}}, \boldsymbol{a}^{\hat{i}}, \boldsymbol{b}^{\hat{i}}$ with given $\boldsymbol{q}^{\hat{i}-1}, \boldsymbol{a}^{\hat{i}-1}, \boldsymbol{b}^{\hat{i}-1}$.
		    \State Acquire $\ddot{\mathcal{E}}(\boldsymbol{\ddot{r}, \ddot{r}'}) = \hat{h}(\boldsymbol{\ddot{r}}) - \hat{\hat{l}}\left((\boldsymbol{\ddot{r}, \ddot{r}'})\right)$ based on (\ref{lowerbound}).
		    \State Solve (\ref{OFdde}) to obtain $\boldsymbol{\ddot{r}}^{\hat{i}}$.
		\Until $|\ddot{\mathcal{E}}(\boldsymbol{\ddot{r}}^{\hat{i}})-\ddot{\mathcal{E}}(\boldsymbol{\ddot{r}}^{\hat{i}-1})| \leq \varepsilon_1$ or $\hat{i} > \hat{i}_{max}$.
        \renewcommand{\algorithmicrequire}{\textbf{Output:}}
    	\Require Optimal ARIS deployment $\boldsymbol{q}^*$.
	\end{algorithmic}
\end{algorithm}
\subsection{Joint ARIS Reflective Elements On/off States and Phase Shift Problem}
For the given ARIS deployment $\boldsymbol{q}$ and power control $\boldsymbol{g}$, the sub-problem $\textbf{P2}$ can be represented as follows:
\begin{maxi!}|s|[0]
	{\boldsymbol{\Delta, \Theta}} {\mathcal{E}(\boldsymbol{\Delta,\Theta})}
	{\label{OF1}}{\textbf{P2:}}
	\addConstraint{
	r_k[t] \geq r_k^{\min}[t], \forall{k} \in \mathcal{K}, \forall{n} \in \mathcal{N}, \forall{i} \in \mathcal{I}_n, \forall{t} \in \mathcal{T}}\label{c21}
	\addConstraint{0 \leq \theta_{i_n}[t] < 2\pi, \forall{n} \in \mathcal{N}, \forall{i} \in \mathcal{I}_n, \forall{t} \in \mathcal{T}}\label{c22}
	\addConstraint{\delta_{i_n}[t] \in\{0,1\},\forall{n} \in \mathcal{N}, \forall{i} \in \mathcal{I}_n, \forall{t} \in \mathcal{T}.}\label{c24}
\end{maxi!}
This problem is still mixed-integer, non-convex, and quite challenging to solve in polynomial time, since the information of the environment is unknown. Moreover, the real-time ARIS reflective elements on/off states requires extensive computation and hardware cost, and conventional optimization methods cannot be applied. The exhaustive search method can be used to find the optimal solution, however it is impractical for large-scale networks. Due to these reasons, we propose DRL approach to solve sub-problem $\textbf{P2}$. The reason we do not apply DRL for the whole optimization problem is that the action spaces combined for all ARIS deployment, ARIS reflective elements on/off states, phase shift, and power control matrices will be too large and demands high computational cost. Here, we implement Actor-Critic Proximal Policy Optimization (AC-PPO) \cite{schulman2017proximal} as it always provides an improved policy by using data that are currently accessible by the agent and thereby ensuring data efficiency and reliable performance. It could also be utilised in the environments where action spaces are discrete or continuous. Typically, since DRL is interpreted as Markov Decision Process (MDP), we first need to define state space $\mathcal{S}$, action space $\mathcal{A}$ and reward $\tilde{R}$.
\subsubsection{State Space}
For each state at time $t$, $s_t \in \mathcal{S}$ can be expressed as the tuples of the users' locations and ARISs' locations, the channel gain of the direct link, the channel gain of the ARIS-UE link and BS-ARIS link, and power control at time $t$, respectively, and can be represented by $\boldsymbol{s}_t = \{\boldsymbol{q}_k[t], \boldsymbol{q}_n[t],\mathbf{H}_{B,k}[t],  \mathbf{h}_{n,k}[t], \mathbf{h}_{B,n}[t], \boldsymbol{g}_k[t], \forall{k} \in \mathcal{K}, \forall{n} \in \mathcal{N}, \forall{i} \in \mathcal{I}_n \}$.

\subsubsection{Action space}
The action at time $t$, $a_t \in \mathcal{A}$ contains the combination of the ARIS reflective elements on/off states variable $\delta_{i_n}[t]$, and phase shift values $\theta_{i_n}[t]$ at time $t$, and can be denoted as $\boldsymbol{a}_t = \{\delta_{i_n}[t], \theta_{i_n}[t], \forall{n} \in \mathcal{N}, \forall{i} \in \mathcal{I}_n\}$

\subsubsection{Reward}
Since the goal of our system is to maximize the energy efficiency, our reward function is defined as
\begin{equation}
\tilde{R}_t(s_t|a_t) =\left \{ \begin{array}{ll}{-1,} & {\text { if $\sum_{k=1}^{K} r_k[t] < r_k^{\min}[t]$,}}  \\ \text {$\mathcal{E}(\boldsymbol{\Delta,\Theta})$,} & {\text { otherwise.}} \end{array}\right.
\end{equation}
As shown in Fig \ref{ACPPO}, in our AC-PPO algorithm, the states information, $s_t$ from the environment is obtained by the agent at the BS, and the agent observes and monitors the status of the location of the users and ARISs, channel gain of the links, and power control for each user. The agent includes the actor model and the critic model\cite{lim2020federated}. The actor model has the stochastic policy model $\pi_\psi(a_t|s_t)$ with its own parameter $\psi$ and learns to take which action under the observation of the input states. The policy $\pi_\psi(a_t|s_t)$ takes the observed states $s_t$ from the environment as an input and suggests actions $a_t$ to take as an output, and calculates the immediate reward $\tilde{R}_t(s_t|a_t)$ depending on the action taken. The reward then provides as feedback to the agent, and the new state information $s(t+1)$ is obtained. Taking into account of the requirements for the users, under given policy $\pi_\psi(a_t|s_t)$ and reward function $\tilde{R}_t(s_t|a_t)$, the cumulative discounted reward function at time $t$ can be denoted as follows:
\begin{equation}
    V^{\pi_\psi}(s_t) = \mathbb{\hat{E}}_t\left[\sum_{t'=t}^{T-1}\xi^{t'-t} \tilde{R}_{t'}(s_{t'}|a_{t'})\right], \forall{s_t} \in \mathcal{S},
\end{equation}
where $0 < \xi < 1$ is the discount factor to prevent the total reward from reaching to infinity.

Moreover, the critic model contains the advantage function, $\hat{A_t}$ which is the estimate of the relative value of the selected action in the current state is defined as \cite{mnih2016asynchronous}:
\begin{equation}{\label{A}}
    \hat{A_t} = V^{\pi_\psi}(s_t) - b(s_t), \forall{s} \in \mathcal{S},
\end{equation}
where $b(s_t)$ is the baseline estimate value function which provides the estimate of the discounted return starting from the current state $s_t$.

The surrogate objective function of AC-PPO is to find the policy that maximizes the total rewards from the environment and can be expressed as follows \cite{schulman2017proximal}: 
\begin{equation}\label{L}
    L^{CLIP}(\psi) = \mathbb{\hat{E}}_t\left[\min \left(r_t(\psi) \hat{A}_t, \operatorname{clip}(r_t(\psi), 1-\epsilon, 1+\epsilon) \hat{A}_t\right)\right],
\end{equation}
where $${r}_{t}(\psi) = \frac{\pi_\psi(a_t|s_t)}{\pi_{\psi_{old}}(a_t|s_t)},$$ means the probability ratio. Given the states and actions, $r_t(\psi) > 1$ is the action is more plausible currently than it was in the old version of the policy, and $0 < r_t(\psi) < 1$ if it is less plausible, and $\epsilon$ is the clipping parameter. The clipping part of the objective function ensures that the PPO does not always favor actions with positive advantage and/or consistently avoid actions with negative advantage. The overall algorithm of the AC-PPO algorithm is described in Algorithm \ref{algo2}.

\begin{algorithm}[t] \caption{AC-PPO algorithm for ARIS reflective elements on/off states and phase shift}\label{algo2}
	\begin{algorithmic}[1]
		\renewcommand{\algorithmicrequire}{\textbf{Input:}}
		\Require Network states $s_t$, learning rate, discount factor $\xi$, clipping parameter $\epsilon$;
		\State \textbf{Initialization} Base policy $\pi_\psi(a_t|s_t)$ with random parameters $\psi$ and clipping parameter $\epsilon$ and initial value function $V^{\pi_\psi}(s_t)$
		\For{$k \in K$}
    		\For{each episode $\hat{t} \in \hat{T}$}
    		    \State Collect the network observations: ARIS deploy-
    		    \NoNumber ments $\boldsymbol{q}$ from Algorithm \ref{algo1} and power control $\boldsymbol{g}$
    		    \NoNumber from Algorithm \ref{algo3} to achieve the initial state $s_0$
    		    \For{each $t \in T$}
    		        \State Forward the network states $s_t \in \mathcal{S}$ to the AC-
    		        \NoNumber PPO algorithm
        		    \State Observe the input states $s_t$ and run the actor
        		    \NoNumber network
        		    \State Select action $a_t \in \mathcal{A}$ based on policy $\pi_\psi(a_t|s_t)$
        		    \State Obtain the reward $\tilde{R}_t(s_t|a_t)$ and $s_{t+1}$
        		    \State Calculate the probability ratio, ${r}_t$
        		    \State Compute $\hat{A_t}$ based on current $V^{\pi_\psi}(s_t)$ at the \NoNumber critic network according to (\ref{A})
        		    \State Compute $L^{CLIP}(\psi)$ according to (\ref{L})
        		    \State Update $\pi_{\psi_{old}}\leftarrow \pi_\psi$
        		 \EndFor
        	\EndFor
    	\EndFor
        \renewcommand{\algorithmicrequire}{\textbf{Output:}}
    	\Require Optimal AC-PPO network with $\boldsymbol{{\Delta}^*}, \boldsymbol{{\Theta}^*}$.
	\end{algorithmic}
\end{algorithm}

\begin{figure*}
    \centering
	\includegraphics[width=\textwidth]{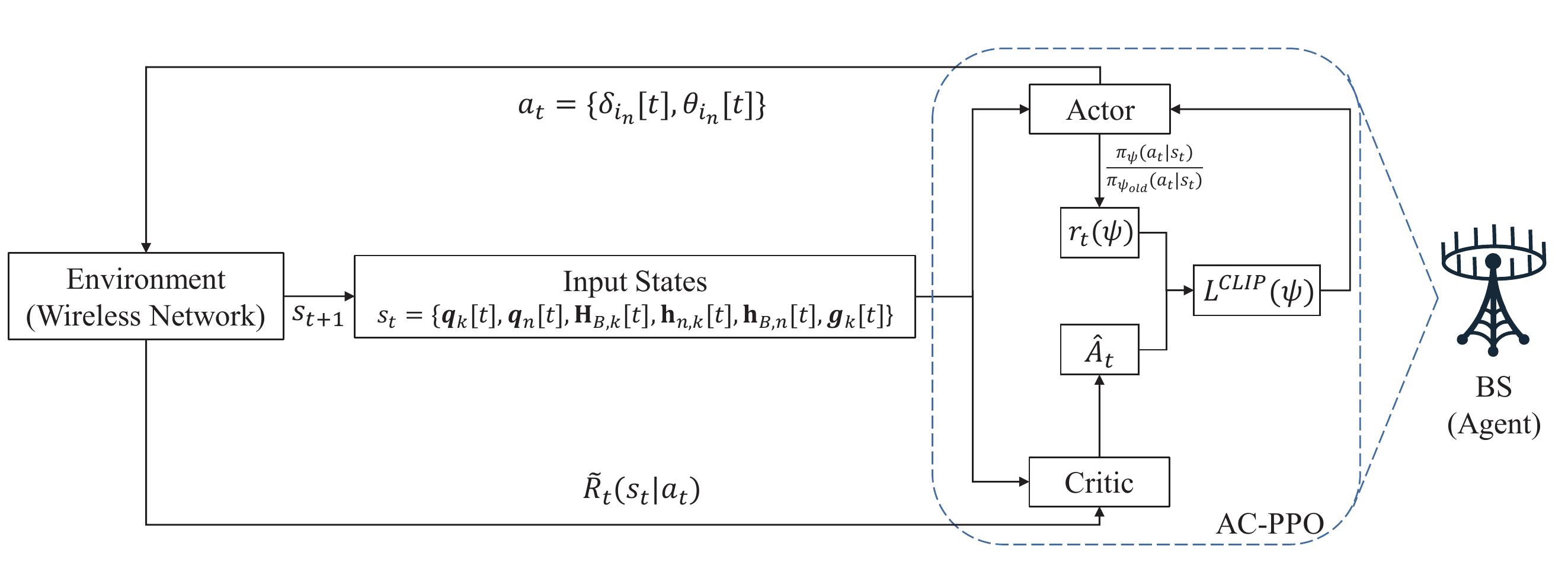}
	\caption{AC-PPO Algorithm for joint ARIS reflective elements on/off states and phase shift.}
	\captionsetup{justification=centering}
	\label{ACPPO}
\end{figure*}

\subsection{Power Control Problem}
For the fixed ARIS deployment $\boldsymbol{q}$, ARIS reflective elements on/off states $\boldsymbol{\Delta}$, and phase shift $\boldsymbol{\Theta}$, sub-problem $\textbf{P3}$ can be represented as follows:
\begin{maxi!}|s|[0]
	{\boldsymbol{g}} {\mathcal{E}(\boldsymbol{g})}
	{\label{OF3}}{\textbf{P3:}}
	\addConstraint{
	r_k[t] \geq r_k^{\min}[t], \forall{k} \in \mathcal{K}, \forall{n} \in \mathcal{N}, \forall{i} \in \mathcal{I}_n, \forall{t} \in \mathcal{T}}\label{c31}
	\addConstraint{\mathrm{tr}(\boldsymbol{g}[t]^H\boldsymbol{g}[t]) \leq P_{\max}, \forall{t} \in \mathcal{T}}.\label{c32}
\end{maxi!}
Sub-problem $\textbf{P3}$ is still a non-convex and NP-hard problem due to constraint (\ref{c31}). Therefore, it is challenging to obtain the solutions in the polynomial time. Therefore, we adopt Whale Optimization Algorithm (WOA) to solve sub-problem $\textbf{P3}$. The WOA is a meta-heuristic algorithm which mimics the whales hunting strategy. The WOA has substantial advantages. First, unlike gradient-based algorithms, which involve computing and updating the gradients and step size throughout every iteration of the optimization process, WOA allows for such computation to be relaxed. Second, WOA is not influenced by the initial feasible solutions, which might have a significant impact on the convergence. Therefore, it has recently gained popularity among research community due to it being efficient optimizer. The WOA algorithm includes two states: 1) the exploitation state (the encircling prey method and spiral bubble-net attacking method), and 2) the exploration state (the searching prey method). The detail explanation of each state can be further described in the following subsections \cite{mirjalili2016whale, mafarja2018whale, pham2020whale}.

\subsubsection{\textbf{Exploitation State}}
The exploitation state of WOA includes two fundamental methods: the encircling prey method, and the spiral bubble-net attack method, which are discussed as follows:

\textbf{Encircling Prey Method}. Once the whales detect the location of their preys, they encircle them. Theoretically, the location of the prey is unknown in the search space, therefore, WOA assumes that the current best search agent is the target prey (optimum or close to optimum). The other whales (search agents) update their locations towards to the best search agent. This behaviour can be mathematically implemented as follows \cite{mirjalili2016whale}:
\begin{equation}\label{ep1}
    \vec{D}=\left|\vec{C} \cdot \vec{g}^{*}(\hat{j})-\vec{g}(\hat{j})\right|,
\end{equation}
\begin{equation}\label{ep2}
    \vec{g}(\hat{j}+1)=\vec{g}^{*}(\hat{j})-\vec{A} \cdot \vec{D},
\end{equation}
 where $\vec{g}^{*}$ is the location of the best search agent, $\hat{j}$ is the current iteration, $\left|\cdot\right|$ is the absolute value. $\vec{C}$ and $\vec{A}$ are coefficient vectors, and are computed as follows:
\begin{equation}\label{vecA}
    \vec{A}=2 \vec{a} \cdot \vec{r}-\vec{a},
\end{equation}
\begin{equation}\label{vecC}
    \vec{C}=2 \cdot \vec{r},
\end{equation}
 where $\vec{r}$ is the random vector between $0$ to $1$, and $\vec{a}$ is the control parameter vector linearly declining from $2$ to $0$ over the iterations, both in exploitation and exploration states. The aim of (\ref{vecA}) and (\ref{vecC}) is to balance between exploitation and exploration. When $A \geq 1$, WOA will perform exploration, and exploitation is done when $A < 1$.
 
 \textbf{Spiral Bubble-net Attack Method}. 
This method combines both shrinking encircling mechanism and spiral movement mechanism of whales. Its purpose is to update the new location to fall between the current agent's location and the best search agent. To mimic the helical shape movement of the whales, the equation can be expressed as
\begin{equation}\label{bb1}
    \vec{D^{\prime}}=\left|\vec{g}^{*}(\hat{j})-\vec{g}(\hat{j})\right|,
\end{equation}
\begin{equation}\label{bb2}
    \vec{g}(\hat{j}+1) = \vec{D^{\prime}} \cdot e^{b j} \cdot \cos (2 \pi l)+\vec{g}^{*}(\hat{j}),
\end{equation}
where $\vec{D^{\prime}}$ indicates the distance between the current search agent and the target prey. Moreover, $b$ is the constant for defining the shape of the logarithmic spiral, and $l$ is the random number between $-1$ and $1$. Here, coefficient vector $\vec{A}$ is updated by setting random values in $[-1,1]$.

\begin{algorithm}[t] \caption{WOA for power control}\label{algo3}
	\begin{algorithmic}[1]
	    \renewcommand{\algorithmicrequire}{\textbf{Input:}}
		\Require Current power control $\boldsymbol{g}$, given $\boldsymbol{q}$, $\boldsymbol{\Delta}$, and $\boldsymbol{\Theta}$;
		\State \textbf{Initialization} At iteration $\hat{j}=1$, initialize the total number of whale population $g_u,$ where $u = \{1,\dots,U\}$, and maximum number of iteration $\hat{j}_{\max}$.
		\State According to (\ref{fitness}), calculate the fitness of the search agents $g_u$ and identify the best search agent $\vec{g}^{*}(0)$.
	    \Repeat
    		\For{$u \leftarrow 1$ to $U$ (the number of whales)}
    		    \State Update $a, A, C, l$ and $p$.
    		    \If{$p < 0.5$}
    		        \If{$|A| < 1$}
    		            \State Update $\vec{D}$ by (\ref{ep1}) and $\vec{g}$ by (\ref{ep2}).
    		        \Else
        		        \State Select a random $\vec{g}_{\mathrm{rand}}$ and update $\vec{D}$ by \NoNumber(\ref{sp1}).
        		        \State Update the location $\vec{g}$ by (\ref{sp2}).
        		    \EndIf
                \Else
                    \State Update $\vec{D}$ by (\ref{bb1}) and $\vec{g}$ by (\ref{bb2}).
                \EndIf
        	\EndFor
        	\State Calculate the fitness of each search agent by (\ref{fitness}).
        	\State Update the location of the best search agent $\vec{g}^{*}(\hat{j})$.
        	\State Update $\hat{j} \leftarrow \hat{j}+1$.
        \Until{$\hat{j} > \hat{j}_{\max}$}
    \renewcommand{\algorithmicrequire}{\textbf{Output:}}
	\Require Optimal power control $\boldsymbol{{g}^*}$.
	\end{algorithmic}
\end{algorithm}
Conventionally, once the whales locate the prey, they approach it using either shrinking encircling method or spiral bubble-net method synchronously. To imitate this synchronous behaviour, we set the 50$\%$ probability to choose between these two methods to update the location of the whales for the optimization. Mathematically, it can be modeled as follows:
\begin{equation}
    \vec{g}(\hat{j}+1)= \begin{cases}\vec{g}^{*}(\hat{j})-\vec{A} \cdot \vec{D}, & \text { if } p<0.5, \\ \vec{D^{\prime}} \cdot e^{b j} \cdot \cos (2 \pi l)+\vec{g}^{*}(\hat{j}), & \text { if } p \geq 0.5.\end{cases}
\end{equation}
where $p = [0,1]$ is the random number to represent the probability to choose between two mechanisms. When $p<0.5$, WOA chooses the shrinking encircling mechanism, and if $p \geq 0.5$, WOA chooses the sprial movement mechanism. 

\subsubsection{Exploration State}: The exploration state of WOA includes the searching for prey method. This state is necessary to prevent the solution from being trapped at the local optimum, and failing to achieve the global optimum.

\textbf{Searching for Prey Method}. This approach is similar to encircling prey method in exploitation state, but instead of claiming the location of best search agent, and here, a random location is selected to update the locations of other search agents. It can be mathematically represented as follows:
\begin{equation}\label{sp1}
    \vec{D}=\left|\vec{C} \cdot \vec{g}_{\mathrm{rand}}(\hat{j})-\vec{g}(\hat{j})\right|,
\end{equation}
\begin{equation}\label{sp2}
    \vec{g}(\hat{j}+1)=\vec{g}_{\mathrm{rand}}(\hat{j})-\vec{A} \cdot \vec{D},
\end{equation}
where $\vec{g}_{\mathrm{rand}}(\hat{j})$ is the location of the search agent randomly selected from the search space.

Since WOA algorithm is designed only for unconstrained optimization, we apply the penalty method to our sub-problem $\textbf{P3}$ in order to deal with the minimum achievable date rate constraint (\ref{c31}) in the problem \cite{pham2020whale}. In our scenario, UEs are considered as a search agent, and the power control of the BS $\boldsymbol{g}$ represents the location of the search agents. At each iteration $\hat{j}$, the power control $\boldsymbol{g}$ can be updated by either the encircling prey method, spiral bubble-net attack method, or searching for prey method. The fitness function of our problem which chooses the optimal search agent can be expressed as follows:
\begin{equation}\label{fitness}
    \text{Fitness}(\boldsymbol{g}) = -\frac{R(\boldsymbol{g})}{P(\boldsymbol{g})} + \varpi \sum_{k=1}^{K} F_k(f_k(\boldsymbol{g}))f_k^2(\boldsymbol{g}),
\end{equation}
where $f_k(\boldsymbol{g})=r_k^{\min}[t]-r_k[t]$ is the inequality function, and $\varpi$ is the penalty factor coefficient. Since our sub-problem $\textbf{P3}$ is the maximization problem, we add the negative sign ahead of the objective function to convert into a minimization problem. The index function $F_k(f_k(\boldsymbol{g}))=1$ if $f_k(\boldsymbol{g})<0$, and $F_k(f_k(\boldsymbol{g}))=0$ if $f_k(\boldsymbol{g})\geq 0$. The pseudo-code of our WOA based power control can be described as in Algorithm \ref{algo3}.
\begin{algorithm}[t!] \caption{Proposed joint ARIS deployment, ARIS reflective elements on/off states, phase shift, power control optimization algorithm}\label{algo4}
    \small
	\begin{algorithmic}[1]
    	    \State {\textbf{Initialization:}} At $\tau = 0$, initialize the variables, $\boldsymbol{q}(0), \boldsymbol{\Delta}(0), \boldsymbol{\Theta}(0), \boldsymbol{g}(0)$;
    	    \Repeat
        	\State By applying $\textbf{Algorithm 1}$, solve problem $\textbf{P1}$ for given $\boldsymbol{\Delta}(\tau), \boldsymbol{\Theta}(\tau)$, $\boldsymbol{g}(\tau)$ to obtain $\boldsymbol{q}(\tau+1)$.
        	\State By applying $\textbf{Algorithm \ref{algo2}}$, solve problem $\textbf{P2}$ for given $\boldsymbol{q}(\tau+1), \boldsymbol{g}(\tau)$ to obtain $\boldsymbol{\Delta}(\tau+1), \boldsymbol{\Theta}(\tau+1)$.
            \State By applying $\textbf{Algorithm \ref{algo3}}$, solve problem $\textbf{P3}$ for given $\boldsymbol{q}(\tau+1), \boldsymbol{\Delta}(\tau+1), \boldsymbol{\Theta}(\tau+1)$ to obtain $\boldsymbol{g}(\tau+1)$.
        	\State Update $\tau \leftarrow \tau+1$.
            \Until{objective value (\ref{OF}) reaches convergence.}
	\end{algorithmic}
\end{algorithm}
\subsection{Overall Algorithm Complexity Analysis}
The overall iterative algorithm for solving our optimization problem (\ref{OF}) is described in Algorithm \ref{algo4} with the aforementioned proposed solutions to three sub-problems. According to the results in \cite{li2021robust, salman20223to} and \cite{pham2020whale}, the complexity of our solutions can be obtained by each algorithm for each sub-problem. For ARIS deployment sub-problem, the SCA is adopted as in Algorithm \ref{algo1}. Since there are $K$ users, the computational complexity of the SCA method is obtained as $\mathcal{O}_1\left(K^{3.5} \log(1/\varepsilon_1)\right)$ where $\varepsilon_1$ is the variable to control the accuracy of the SCA algorithm. For the AC-PPO algorithm for ARIS reflective elements on/off states and phase shift as in Algorithm \ref{algo2}, the computational complexity is $\mathcal{O}_2\left(a^2K\right)$, where $a \in \mathcal{A}$ is the total number of actions taken by the agent. With WOA for power control as in Algorithm \ref{algo3}, the computational complexity is $\mathcal{O}_3\left(\hat{J}U(m+K)\right)$, where $\hat{J}$ is the number of iterations for WOA, $U=30$ denotes the number of whale populations, and $m$ represents the number of inequality constraints in sub-problem $\textbf{P3}$. Henceforth, the overall computational complexity for solving (\ref{OF}) can be acquired as $\mathcal{O}\left(\hat{\tau}K^{3.5} \log(1/\varepsilon_1)+ \hat{\tau}a^2K + \hat{\tau}\hat{J}U(m+K) \right)$, where $\hat{\tau}$ denotes the number of iterations for Algorithm $\ref{algo4}$.

\section{Performance Evaluation}\label{evaluation}
In this section, we evaluate our proposed technique of energy-efficient multiple ARISs-assisted downlink communication system via numerical analysis. The network design comprises of 12 UEs uniformly distributed within 100 m $\times$ 100 m square region and the BS with 15 multiple antennas located at the center of the coverage area. There are 4 ARISs to support communication, and each ARIS is integrated with 10 reflective elements. The ARISs can hover at a maximum altitude of 100 m. The simulation parameters can be observed in Table \ref{simtable}. 
\begin{table}
	\centering
	\caption{Simulation parameters}
	\begin{tabular}{|l|c|}
		\hline
		\textbf{Parameter}                 & \textbf{Value}    \\
		\hline \hline
		Number of reflective elements on ARIS $n$, $I_n$  & 10 \\ \hline
		Bandwidth $W$		& 2 MHz \\ \hline
		Noise power $\sigma^2$		& -174 dBm \\ \hline
		Path loss exponent $\alpha$                & 4 \\ \hline
		Channel gain at reference distance $\kappa$ & -40 dBm \\ \hline
		Rician factor $\hat{R}$                     & 10 \\ \hline
		Circuit power of each RIS element $P_{\mathrm{RIS}}$ & 10 dBm \cite{yang2021energy} \\ \hline
		BS power amplifier efficiency $\mu$ & 0.8 \cite{yang2021energy} \\ \hline
		Circuit power of each user $P_{k}^{\mathrm{cir}}$        & 10 dBm \cite{yang2021energy} \\ \hline
		Clipping parameter $\epsilon$ & 0.2 \\ \hline
		Learning rate & 0.0002 \\ \hline
		Discount factor $\xi$ & 0.9 \\ \hline
		Mini batch size & 64 \\ \hline
		Number of episodes & 1,000 \\ \hline
		Number of time steps     & 300,000       \\ \hline
	\end{tabular}
	\label{simtable}
\end{table}
To evaluate our proposed algorithm, we compare our method with four benchmark schemes, which are explained as follows:
\begin{itemize}
    \item \emph{Single-ARIS:} In this scheme, we implement a single ARIS instead of using multiple ARISs to support the downlink communications from BS. The optimization problem is then solved by using our proposed algorithms.
    \item \emph{ARIS (NPS):} In this approach, we deploy 4 ARISs with the fixed phase shifts. The ARIS deployment problem is solved by SCA, the ARIS reflective elements on/off states problem is solved by AC-PPO, and transmit power allocation problem is solved by WOA alternatively.
    \item \emph{Random:} In this design, we randomly deploy the ARISs, fixed the reflective elements ON/OFF states, and fixed the transmit power of the BS.
    \item \emph{UAV-Relay:} In this method, ARIS is not used. Instead, 4 UAVs are deployed as relays and the incident signal is linearly processed, and re-transmit them toward the required destination. The optimization problem is then solved by using our proposed algorithms.
\end{itemize}

\begin{figure}
\begin{minipage}[t]{0.49\linewidth}
    \centering
	\includegraphics[width=1\textwidth]{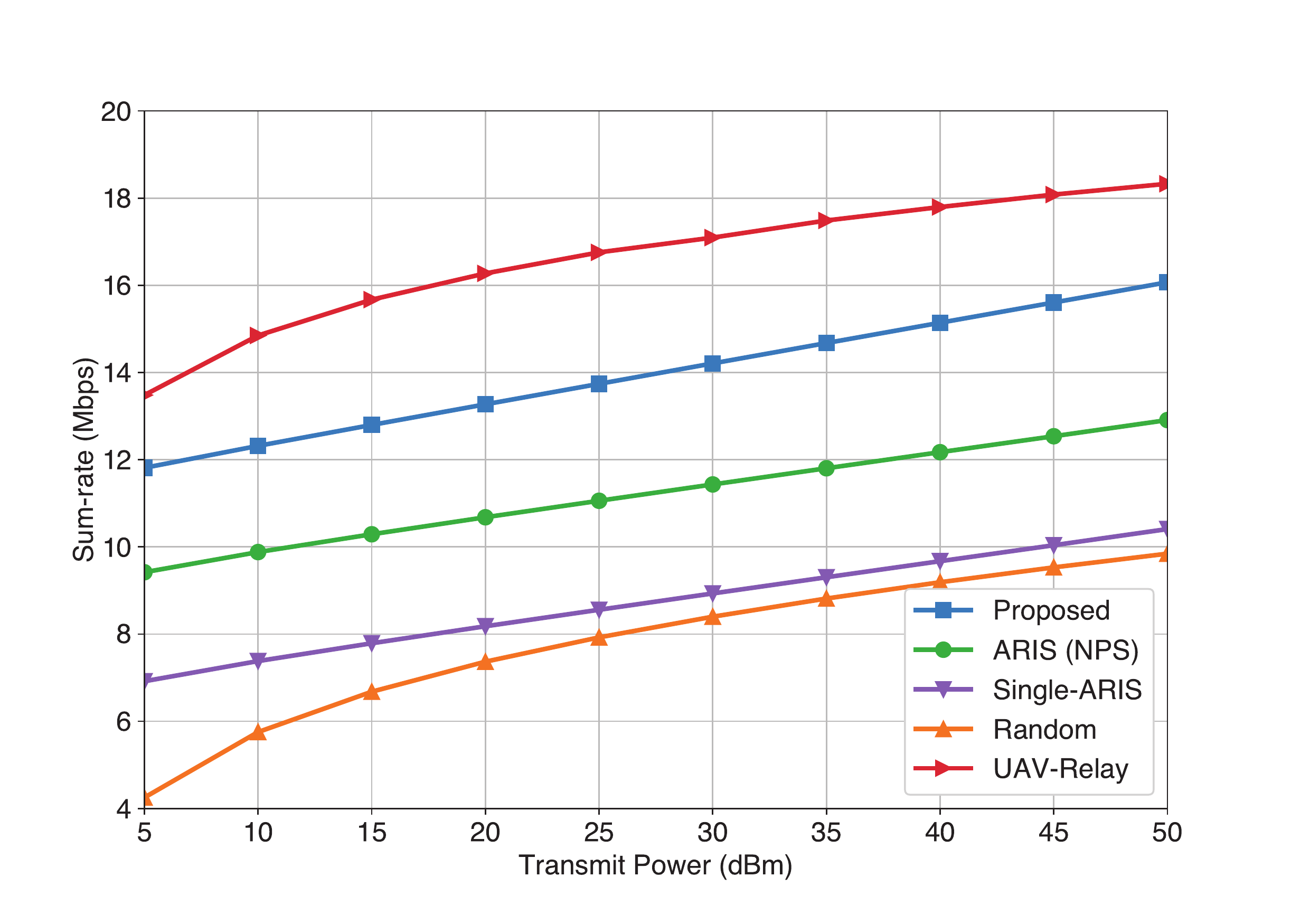}
	\caption{Performance comparison of sum-rate for different transmit power.}
	\label{srvstp}
\end{minipage}
\hspace{0.1cm}
\begin{minipage}[t]{0.49\linewidth}
    \centering
	\includegraphics[width=1\textwidth]{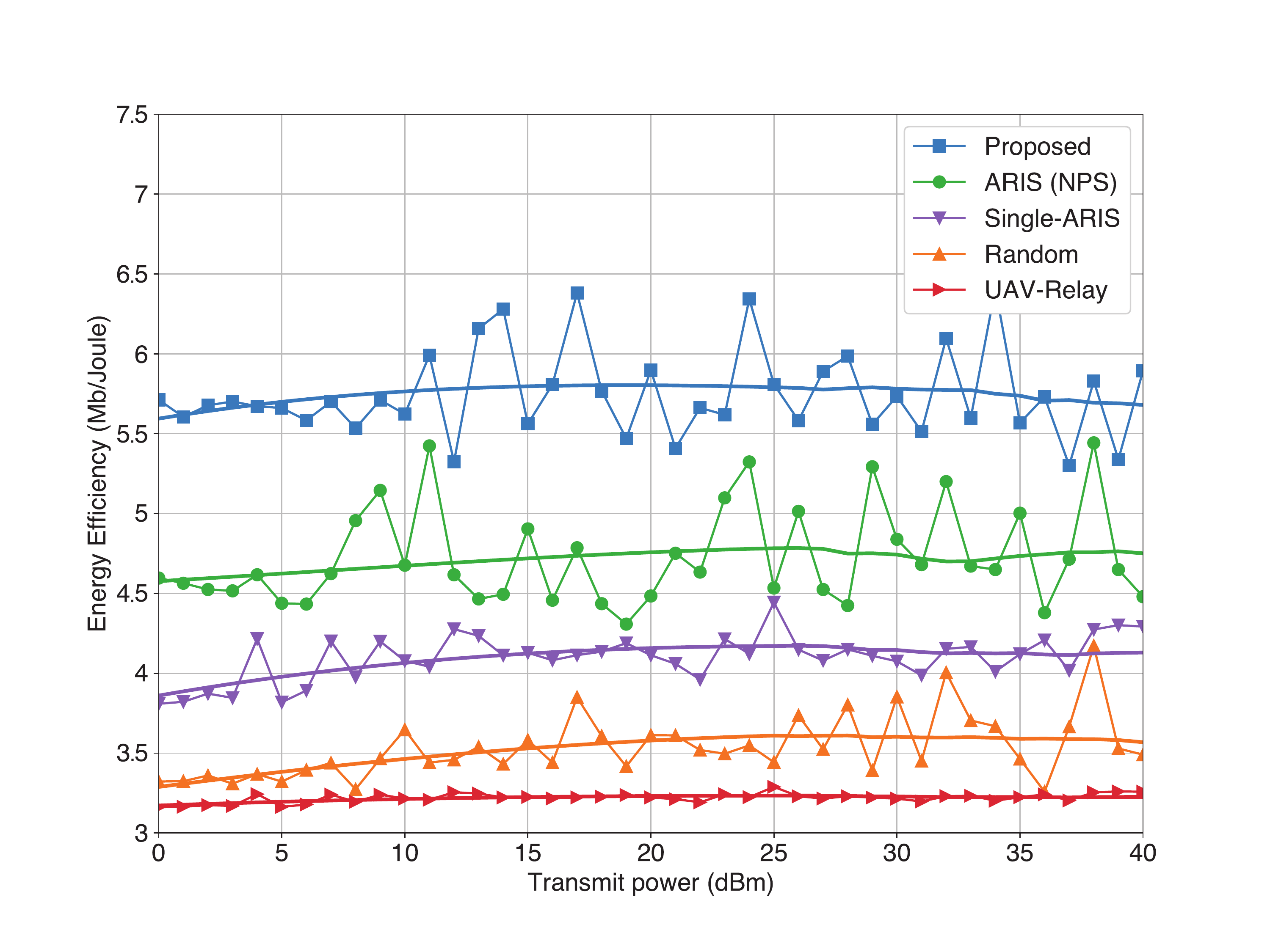}
	\caption{Performance comparison of energy efficiency for different transmit power.}
	\label{eevstp}
\end{minipage}
\end{figure}
	
\begin{figure}
\begin{minipage}[t]{0.49\linewidth}
    \centering
	\includegraphics[width=1\textwidth]{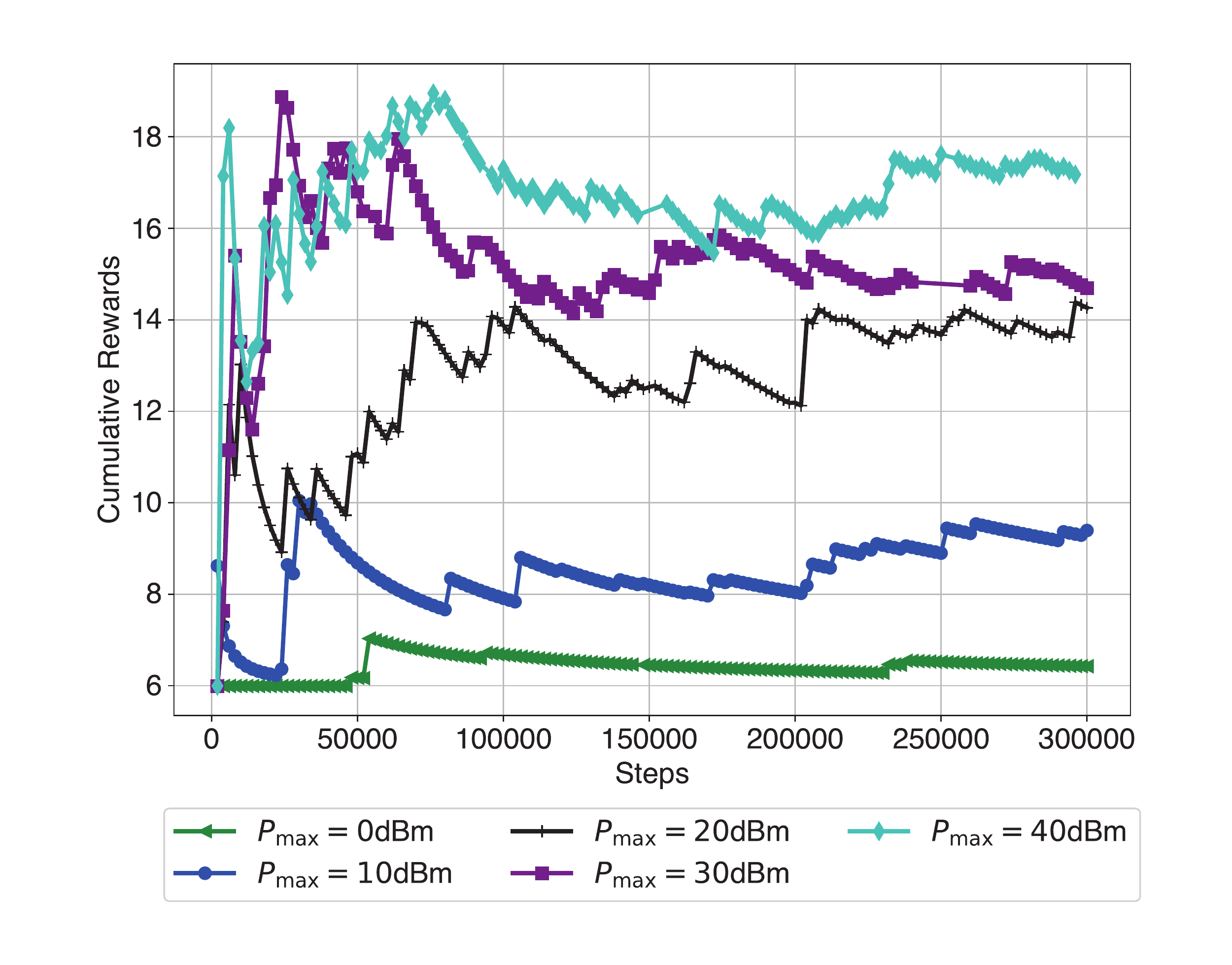}
	\caption{Performance comparison of cumulative rewards for different $P_{\max}$.}
	\label{reward_pmax}
\end{minipage}
\hspace{0.1cm}
\begin{minipage}[t]{0.49\linewidth}
    \centering
	\includegraphics[width=1\textwidth]{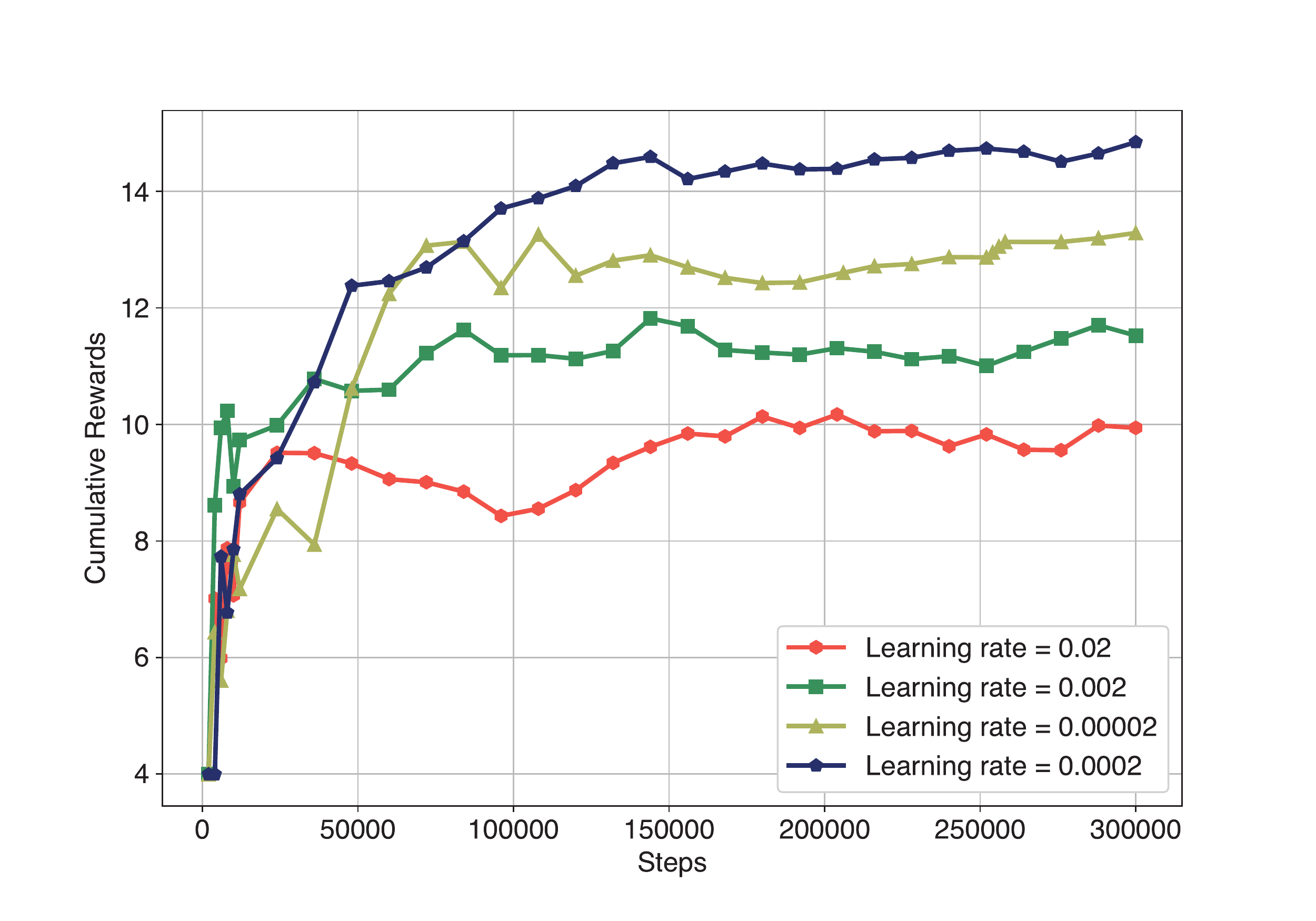}
	\caption{Performance comparison of cumulative rewards for different learning rate.}
	\label{lr}
\end{minipage}
\end{figure}

\begin{figure}
\begin{minipage}[t]{0.5\linewidth}
    \centering
	\includegraphics[width=1\textwidth]{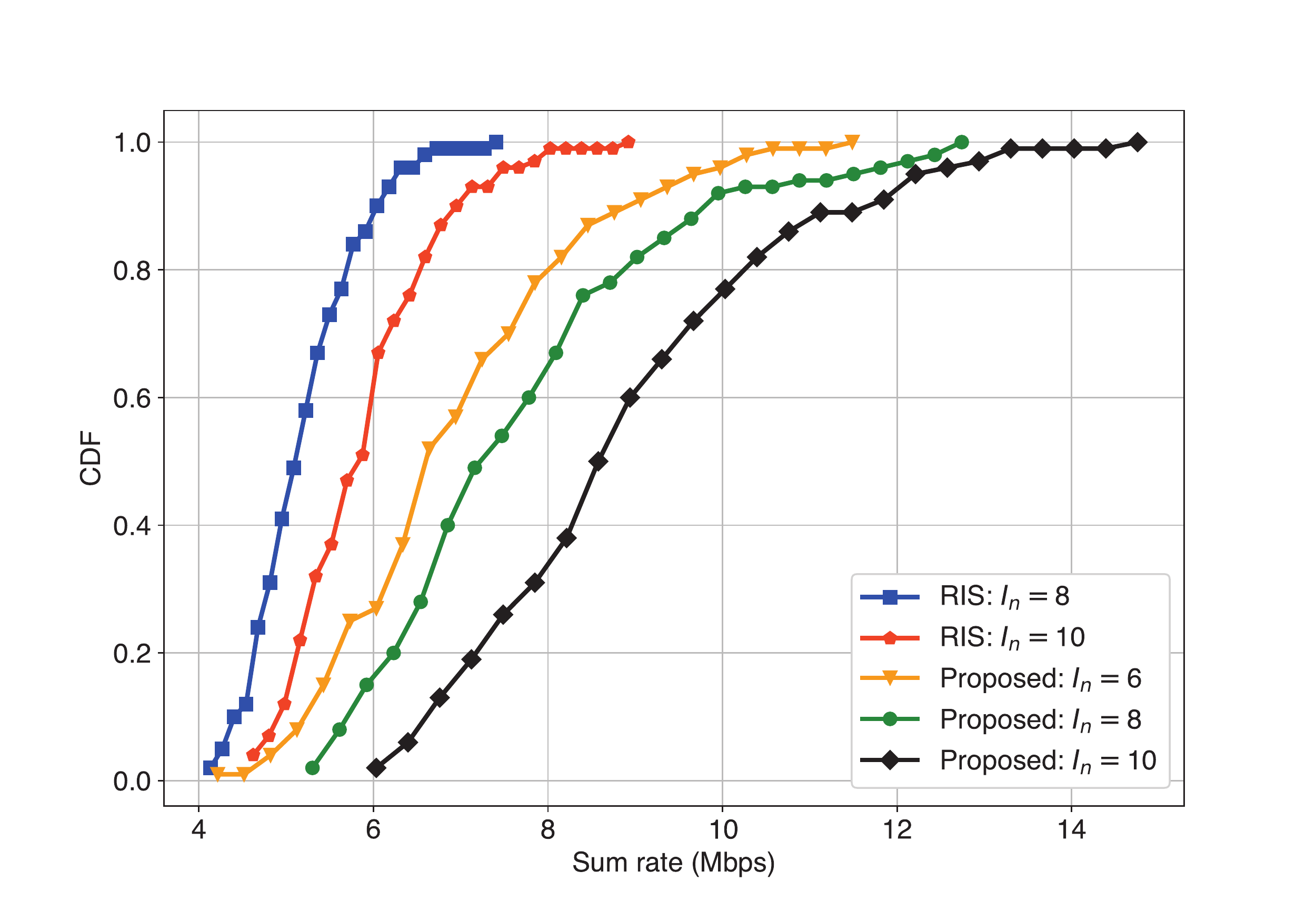}
	\caption{CDF of sum-rate with different number of ARIS reflective elements for RIS and proposed system.}
	\label{srcdf}
\end{minipage}
\hspace{0.1cm}
\begin{minipage}[t]{0.5\linewidth}
    \centering
	\includegraphics[width=1\textwidth]{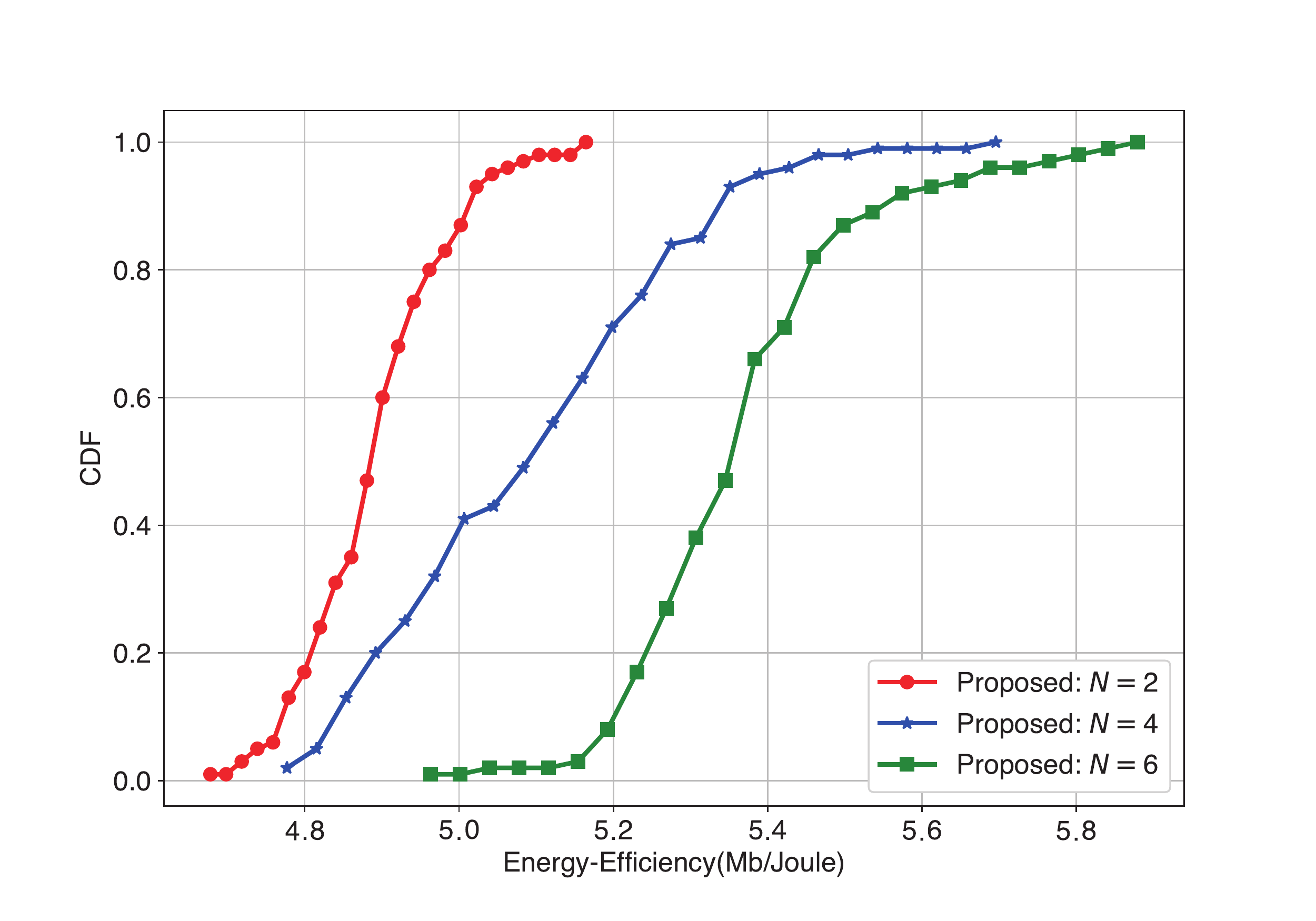}
	\caption{CDF of energy-efficiency with different number of ARIS.}
	\label{eecdf}
\end{minipage}
\end{figure}

\begin{figure}
    \centering
	\includegraphics[width=0.5\linewidth]{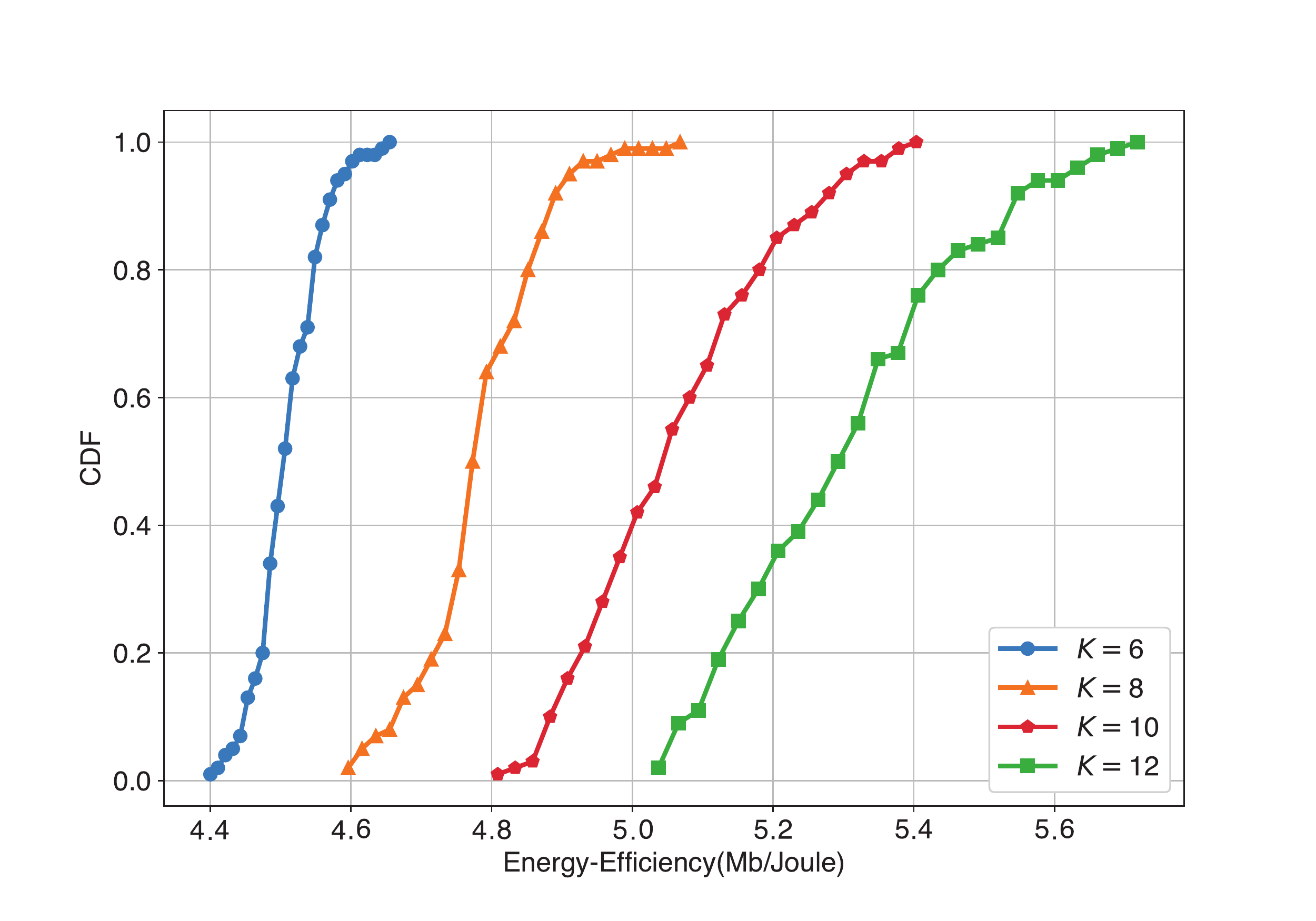}
	\caption{CDF of energy-efficiency with different number of UEs.}
	\label{eecdfwuser}
\end{figure}

Fig. \ref{srvstp} compares the average sum-rate of users with our proposed DRL-based algorithm towards the above-mentioned benchmark schemes. In all circumstances, the average sum-rate increases as the maximum transmit power rises. Our proposed algorithm outperforms ARIS (NPS) by $24\%$ and single-ARIS by $58\%$, respectively. This demonstrates how the multiple ARISs can achieve better outcomes than a single ARIS since it can provide several paths between the BS and UEs. Our algorithm outperforms most benchmark schemes in average sum-rate except for the UAV-relay scenario. UAV-relay provides the highest performance since it processes and re-transmits the incident signal using a dedicated power source. As a consequence, it consumes more energy which can be observed in Fig. \ref{eevstp}.

Fig. \ref{eevstp} depicts the comparison of the energy efficiency under different algorithms. The smooth data is demonstrated by the solid curved line, which represents the Savitzky-Golay filter. In all scenarios, energy efficiency increases faster until the transmit power of the BS reaches to $10$ dBm. Since then, the energy efficiency hasn't improved much as the function does not increase monotonically with respect to to transmit power. Our proposed algorithm achieves $72\%$ increase compared to the UAV-relay scenario and $43\%$ increase compared to the single-ARIS scenario.

Next, we evaluate the convergence of our proposed AC-PPO algorithm with different values of $P_{\max}$ ranging from 0 dBm to 40 dBm. As shown in Fig. \ref{reward_pmax}, it can be observed that in all cases, the convergence of our cumulative rewards increases with respect to increase in transmit power. We can see a significant difference in the performance when $P_{\max}$ is low, and the performance difference becomes lesser as $P_{\max}$ becomes higher. This suggests that SINR has significant impact on the overall performance of the cumulative rewards.

Following that, we examine how various learning rates affect on our cumulative rewards, ranging from the set of $\{0.02, 0.002, 0.0002, 0.00002\}$. As seen in Fig. \ref{lr}, a higher learning rate does not enable our cumulative reward to converge faster but provides less performance. Although it takes longer to converge, the learning rate of 0.0002 delivers better performance than 0.002 and 0.00002. In this case, we chose a learning rate of 0.0002 since it produces the highest cumulative rewards for our proposed method.

Furthermore, we compare the spectral efficiency of our proposed multiple ARISs-assisted system to that of multiple RISs-assisted systems. In this approach, we employ 4 RISs on the ground level rather than mounted on the UAVs. Fig. \ref{srcdf} demonstrates the cumulative distribution function (CDF) values of average sum-rates. As seen in Fig. \ref{srcdf}, our proposed system achieved $69\%$ performance increase compared to RISs-assisted system. This is because our proposed system takes into account the deployment of UAVs, which provides improved LOS communications between the BS and UEs. Concurrently, we experiment the performance of spectral efficiency with different numbers of reflective elements. In all scenarios, the results show that as the number of reflective elements increases, so do the UEs' average sum-rates. This indicates that the spectral efficiency will be improved by increasing the number of reflective parts.

Finally, we examine the energy efficiency with various ARIS numbers and different numbers of UEs, respectively. As shown in Fig. \ref{eecdf}, we can observe that the energy efficiency improves as the number of ARIS components increases. Moreover, when the number is low, the energy efficiency improvement is significantly more compared to a larger number of ARIS. This indicates that for the small cell network with 12 UEs, we do not need to install a large amount of ARISs. Next, as shown in Fig. \ref{eecdfwuser}, we can observe that the energy efficiency almost linearly increases with increasing number of UEs between 6 to 12. We can perceive that more ARISs and more UEs help improve the energy efficiency of the multiple ARISs-assisted downlink communication system.

\section{Conclusion}\label{conclusion}
In this paper, we have studied an energy-efficient multiple ARISs-assisted downlink communication system. To maximize energy efficiency, we formulated a joint ARIS deployment, ARIS reflective elements on/off states, phase shift, and power control problem. As the formulated problem is MINLP and NP hard, we decompose our problem into three sub-problems: ARIS deployment problem, joint reflective elements ON/OFF states and phase shift problem, and power control problem. We then proposed SCA approach, AC-PPO method and WOA to solve our sub-problems, alternatively. Through extensive numerical analysis, we have proved that by integrating multiple ARISs in the downlink communication system, it can significantly outperform several benchmark schemes; especially in spectral efficiency compared to a multiple RISs-assisted scenario and energy efficiency compared to a single ARIS-assisted scenario.

\numberwithin{equation}{section}
\appendices
\ifCLASSOPTIONcaptionsoff
  \newpage
\fi



%


\bibliographystyle{IEEEtran}
\bibliography{mybib}

%








\end{document}